\title{Federated Heavy Hitter Recovery under Linear Sketching}
\author{Adri\`a Gasc\'on \and Peter Kairouz\and Ziteng Sun \and Ananda Theertha Suresh}
\date{
\texttt{\{adriag, kairouz, zitengsun, theertha\}@google.com} \\[2ex]
Google Research}
  \theoremstyle{definition}
  \theoremstyle{plain}
  \newtheorem{theorem}{Theorem}
  \newtheorem{lemma}{Lemma}
  \theoremstyle{remark}
  \newtheorem{remark}{Remark}
\newenvironment{proofof}[1]{\noindent{\bf Proof of {#1}:~~}}{\(\qed\)}
\newcommand{\ignore}[1]{}
\newcommand{\EE}{\mathbb{E}}
\newcommand{\RR}{\mathbb{R}}
\newcommand{\expectation}[1]{\EE\left[#1\right]}
\def \cA     {{\cal A}}
\def \cN     {{\cal N}}
\def \cX     {{\cal X}}
\newcommand{\eg}{\textit{e.g.,}\xspace}
\newcommand{\ie}{\textit{i.e.,}\xspace}  %
\newcommand{\iid}{\textit{i.i.d.}} %
\def \ceil#1{{\lceil{#1}\rceil}}
\def \floor#1{{\lfloor{#1}\rfloor}}
\newcommand{\absv}[1]{\left|#1\right|}
\def \paren#1{{({#1})}}
\def \Paren#1{{\left({#1}\right)}}
\newcommand{\eqdef}{{:=}}
\newcommand{\probof}[1]{\Pr\Paren{#1}}
\def \th     {{\rm th }}
\def\ignore#1{}
\newcommand{\bi}{\begin{itemize}}
\newcommand{\ei}{\end{itemize}}
\def\orpro{\mathop{\mathchoice
   {\vee\kern-.49em\raise.7ex\hbox{$\cdot$}\kern.4em}
   {\vee\kern-.45em\raise.63ex\hbox{$\cdot$}\kern.2em}
   {\vee\kern-.4em\raise.3ex\hbox{$\cdot$}\kern.1em}
   {\vee\kern-.35em\raise2.2ex\hbox{$\cdot$}\kern.1em}}\limits}
\def\andpro{\mathop{\mathchoice
 {\wedge\kern-.46em\lower.69ex\hbox{$\cdot$}\kern.3em}
 {\wedge\kern-.46em\lower.58ex\hbox{$\cdot$}\kern.25em}
 {\wedge\kern-.38em\lower.5ex\hbox{$\cdot$}\kern.1em}
 {\wedge\kern-.3em\lower.5ex\hbox{$\cdot$}\kern.1em}}\limits}
\def\simge{\mathrel{%
   \rlap{\raise 0.511ex \hbox{$>$}}{\lower 0.511ex \hbox{$\sim$}}}}
\def\simle{\mathrel{
   \rlap{\raise 0.511ex \hbox{$<$}}{\lower 0.511ex \hbox{$\sim$}}}}
\newcommand{\idc}[1]{{\mathbbm{1} \left\{ #1\right\}}}
\newcommand{\norm}[1]{\lVert{#1}\rVert}
\newcommand{\ns}{n}
\newcommand{\nspu}{m}
\newcommand{\dist}{\tau}
\newcommand{\ahist}{ApproxHist}
\newcommand{\ahh}{ApproxHH}
\renewcommand{\ignore}[1]{}
\def\withcolors{0}
\def\withnotes{0}
\renewcommand{\th}[1]{ {\noindent \textit{\small\textcolor{blue}{theertha: #1}}}}
\newcommand{\zs}[1]{{\noindent \textit{\small\textcolor{red}{ziteng: #1}}}}
\newcommand{\znote}[1]{ [\textcolor{purple}{Ziteng: #1}] }
\newcommand{\tnote}[1]{ [\textcolor{blue}{Theertha: #1}] }
\newcommand{\anote}[1]{ [\textcolor{orange}{Adria: #1}] }
\newcommand{\todo}[1]{ [\textcolor{red}{TODO: #1}] }
\renewcommand{\th}[1]{{}}
\newcommand{\zs}[1]{{}}
\newcommand{\znote}[1]{}
\newcommand{\tnote}[1]{}
\newcommand{\anote}[1]{}
\newcommand{\todo}[1]{}
\newcommand{\new}[1]{{\color{red}{#1}}}
\newcommand{\new}[1]{{{#1}}}
\newcommand{\linagg}{LinSketch}
\newcommand{\thr}{\tau}
\newcommand{\deciblt}{{\rm Dec}}
\newcommand{\rep}{b}
\newcommand{\heavy}{H}
\newcommand{\hash}{g}
\newcommand{\size}{C}
\newcommand{\width}{W}
\newcommand{\length}{H}
\newcommand{\mmax}{M_{\rm max}}
\newcommand{\lmax}{\ell_{\rm max}}
\newcommand{\indic}[1]{\mathbbm{1}\left\{ #1 \right\}}
\begin{document}

\maketitle

\begin{abstract}
Motivated by real-life deployments of multi-round federated analytics with secure aggregation, we investigate the fundamental communication-accuracy tradeoffs of the heavy hitter discovery and approximate (open-domain) histogram problems under a linear sketching constraint. We propose efficient algorithms based on local subsampling and invertible bloom look-up tables (IBLTs).  We also show that our algorithms are information-theoretically optimal for a broad class of interactive schemes. The results show that the linear sketching constraint does increase the communication cost for both tasks by introducing an extra linear dependence on the number of users in a round. Moreover, our results also establish a separation between the communication cost for heavy hitter discovery and approximate histogram in the multi-round setting. The dependence on the number of rounds $R$ is at most logarithmic for heavy hitter discovery whereas that of approximate histogram is $\Theta(\sqrt{R})$. We also empirically demonstrate our findings.
\end{abstract}

\section{Motivation}
\label{sec:motivation}

Collecting and aggregating user data drives improvements in the app and web ecosystems. For instance, learning popular out-of-dictionary words can improve the auto-complete feature in a smart keyboard, and discovering malicious URLs can enhance the security of a browser. However, sharing user data directly with a service provider introduces several privacy risks. 

It is thus desirable to only make aggregate data available to the service provider, rather than directly sharing (unanonymized) user data with them. This is typically achieved via multi-party cryptographic primitives, such as a \textit{secure vector summation} protocol \citep{MelisDC16, bonawitz2017secure, TelemtryPrio, BellBGL020}. For instance, for closed domain histogram applications, the users can first “one-hot” encode their data into a vector of length $d$ (the size of the domain) and then participate in a secure vector summation protocol to make the aggregate histogram (but never the individual user contributions) available to the service provider. 

\paragraph{Federated heavy hitters recovery. } The abovementioned solution requires $\Omega(d)$ communication. However, in many real life applications the domain size is very large or even unknown a priori. For example, the set of new URLs can be represented via 8-bit character strings of length 100, and can thus take $d = 256^{100}$ values, which is clearly impossible to support in practice. In such settings, linear\footnote{Linearity is necessary because non-linear compression/sketching schemes would not work under the secure vector summation primitive which only makes the sum of client-held vectors available to the server.} sketching is often used to reduce the communication load.
For example, ~\citet{MelisDC16} use secure count-min sketch aggregation 
for privacy preserving training of recommender systems, and ~\citet{Corrigan-GibbsB17} rely on count-min sketches
for gathering browser statistics, i.e. aggregate histogram queries.
Similarly, ~\citet{Hu0LGWGLD21} rely on secure aggregation of variants of Flajolet-Martin sketches for distributed cardinality estimation. \citet{bnoeh2021lightweight} uses sketching to reduce the cost for distributed subset-histogram queries. In the work closest to ours, \citet{chen2022secagg} show that
count-sketches can be used to recover the \textit{heavy hitter} items (i.e. frequently appearing items) while reducing the communication overhead. All these protocols operate in the single-round setting.

\paragraph{Sketching in multi-round aggregation schemes.} Even though count-sketches are great step towards solving the heavy hitters problem, this approach has only been analyzed in the single round data aggregation setting. However, most commonly deployed systems for federated analytics employ \textit{multi-round} schemes for data aggregation \citep{fl-sys}. This is primarily because (a) not all users are available around the same time, (b) the population may be very large (in the billions of devices) and therefore the server has to aggregate data over batches for bandwidth/compute reasons, and (c) running the cryptographic secure vector summation protocol has compute and communication costs that are super linear in the number of users we are aggregating over \citep{BellBGL020, bonawitz2017secure}. Further, count sketch based approaches have a decoding runtime that is linear in $d$, which is infeasible in the open domain setting, and improving it to $\log d$ involves blowing up the communication cost by the same factor. 

\paragraph{Our contributions.}
Our paper thus takes a principled approach towards uncovering the fundamental accuracy-communication tradeoffs of the heavy hitters recovery problem under the linearity constraints imposed by secure vector summation protocols. \new{We show that linearity constraints increase the per-user communication complexity. For a fixed total number of users, as the number of rounds increases, the required communication decreases due to less stringent linearity constraint.} 
\new{Moreover, surprisingly, }we show that count-sketches are strictly sub-optimal for this application, and we develop a novel provably optimal approach that combines client-side (local) subsampling with inverse Bloom lookup tables (IBLTs). Roughly speaking, we show (via lower bounds) that in the $R$-round case, any approach that solves an approximate histogram problem (with additive error)  will incur a $\sqrt{R}$ factor penalty in the communication cost, while our optimal approach incurs $\log(R)$.
Hence, even non-trivial modifications of count-sketches \new{and other frequency oracle-based algorithms} are strictly sub-optimal. 

We also empirically evaluate our proposed algorithms and compare it with count-sketch baselines. Significant advantage of our algorithm is observed, especially when $R$ is large. In the setting of \cref{fig:communication_m10000}, to achieve an F1 score of 0.8, we see a 10x improvement in communication compared to the baseline using Count-sketch.

\paragraph{Organization.} We formally define the problem in \cref{sec:prelim} and then discuss our results in \cref{sec:results}. Algorithms for heavy hitter recovery and approximate histogram are presented in \cref{sec:ahh} and \cref{sec:ahist}, respectively. We discuss a practical modification of our algorithm in \cref{sec:adaptive} and present the experimental results in \cref{sec:exp}.

\section{Problem setup and preliminaries}
\label{sec:prelim}
We consider heavy hitter discovery in the distributed setting with multi-round communication between the users and a central server. 
Suppose users come in $R$ rounds. In round $r \in [R]$, there are $\ns$ users, denoted by the set $B_r$. %
We assume the sets are pairwise disjoint, \ie $\forall r \neq r', B_r \cap B_r' = \emptyset$. Each user $i \in B_r$ contributes $m_i$ samples with a contribution bound $m_i \le m$ from a finite domain $\cX$ of size $d$. Let $h_i$ denote the user's local histogram where $\forall x \in \cX$, $h_i(x)$ is the number of times element $x$ appeared in user $i$'s local samples. By assumption, we have $\norm{h_i}_1 = m_i \le m$. Let $h^{(r)}$ be the aggregated histogram in round $r$, \ie
\[
\forall x \in \cX: \quad \quad h^{(r)}(x) = \sum_{i \in B_r} h_i(x).
\]
The aggregated histogram across all $R$ rounds is denoted by $h^{[R]}$ where
\[
    \forall x \in \cX: \quad \quad h^{[R]}(x) = \sum_{r \in [R]} h^{(r)}(x).
\]

The total number of users is denoted by $N \eqdef n R$. 
We will focus on cases where $d \gg Nm$, \ie the case where the support is large and the data is sparse.

The goal of the server is to learn useful information about the aggregated histogram $h^{[R]}$. More precisely, we consider the two tasks described below.

\ignore{
\paragraph{$(\thr, \dist)$-heavy hitter (\ahh).}  For a given threshold $\thr$, the goal of $\dist$-approximate heavy hitter recovery on the entire data stream is to return a set $\heavy$ such that with probability $1 - \beta$,
 \begin{enumerate}
     \item If $h^{[R]}(x) \ge \thr$, $x \in \heavy$.
     \item If $h^{[R]}(x) \le \thr - \dist$, $x \notin \heavy$.
 \end{enumerate}
 To simplify the notation, for most parts of the paper, we will consider heavy hitter recovery for $\dist = 9\thr/10$. And we will denote this problem as $\thr$-approximate heavy hitter. The results naturally extend to cases where $\dist = C \thr$ for some constant $C$ without changing the communication complexity up to a constant factor.
}

\paragraph{$\dist$-heavy hitter (\ahh).}  For a given threshold $\dist$, the goal of $\dist$-heavy hitter recovery on the entire data stream is to return a set $\heavy$ such that with probability $1 - \beta$,
 \begin{enumerate}
     \item If $h^{[R]}(x) \ge \dist$, $x \in \heavy$.
     \item If $h^{[R]}(x) \le \dist/10$, $x \notin \heavy$.
 \end{enumerate}
\paragraph{$\dist$-approximate histogram (\ahist).} The goal is to return an approximate histogram $\widehat{h}^{[R]}$ such that with probability $1 - \beta$,
\[
    \forall x \in \cX, \quad \quad \absv{\widehat{h}^{[R]}(x) - {h}^{[R]}(x)} \le \dist.
\]

It can be seen that $\dist/3$-approximate histogram is a harder problem than $\dist$-heavy hitter (HH) since an $\dist/3$-approximate histogram would imply a set of approximate heavy hitters by returning $H$ to be the list of elements with approximate frequency more than $\thr - \dist/2$. \new{Previous work often solves \ahh~by reducing it to \ahist~(\eg, in \citet{chen2022secagg}). However, as we show in this paper, our work establishes a seperation between the two tasks in terms of the communication complexity in the multi-round setting.} 

\paragraph{Efficient decoding.} Since $d \gg Nm$, we require
efficient encoding (run by users) and decoding (run by the server). More precisely, the encoding/decoding time should be polynomial in $N, m, R, \log d, \log(1/\beta)$ and other parameters.

\paragraph{Per-user communication complexity.} We focus on distributed settings where each user has limited uplink communication capacity. In particular, each user must compress their local histogram $h_i$ to a message of bit length $\ell$, denoted by $Y_i$. And the server must solve the above tasks based on the received messages. The \emph{communication complexity} of each task is the \emph{smallest} bit length such that there exists a communication protocol to solve the task. 

\begin{table*}[ht]
\centering
\renewcommand{\arraystretch}{2}
\begin{tabular}{|c|c|c|c|}
\hline
 Task/Setting & 1-round  \linagg & $R$-round \linagg & Without \linagg~($R = N$)\\ \hline
  $\thr$-\ahh & $\tilde{\Theta} \Paren{\frac{mN}{\thr}}$& $\tilde{\Theta}\Paren{\frac{mN}{\thr R}}$ & $\tilde{\Theta}\Paren{\frac{m}{\thr}}$\\ \hline
  $\thr$-\ahist &  $\tilde{\Theta} \Paren{\frac{mN}{\thr}}$& $\tilde{\Theta} \Paren{ \min\{\frac{mN}{\thr\sqrt{R}}, \frac{mN}{R} \}}$ & $\tilde{\Theta}\Paren{\min\{ \frac{m \sqrt{N}}{\thr}, m\}}$ \\ \hline
\end{tabular}
\caption{Per-user communication complexity. All described bounds can be acheived by a \emph{non-interactive} protocol with server runtime $\text{poly}(m,n,R,\log(d), \log(1/\beta))$. Recall $N = nR$ denotes the total number of users. All bounds cannot be improved up to logarithmic factors
even under \textit{interactive} protocols. $\tilde{\Theta}$ omits factor that are logarithmic in $\tau, R, m$ and $N$.
}
\label{tab:results}
\end{table*}

\paragraph{Distributed estimation with linear sketching (\linagg).} A even more stringent communication model is the linear summation model. In each round $r$, each user $i \in B_r$ can only send a message $Y_i$ from a finite ring $G_r$ %
based on their local histogram and shared randomness $U$. 
For all $i \in B_r$, let
\[
    Y_i = f_i(h_i, U).
\]
Under the linear aggregation model, the server only observes 
\[
    Y^{(r)} = \sum_{i \in B_r} Y_i,
\]
where the addition is the additive operation in the ring $G_r$ and by definition, $Y^{(r)} \in G_r$.
The reason why we restrict ourselves
to a finite ring is for compatibility with cryptographic protocols for secure vector summation~\citep{bonawitz2017secure, BellBGL020}, which operates over a finite space.
These protocols ensure that any additional
information observed by the server beyond $Y^{(r)}$ can in fact be simulated given $Y^{(r)}$, under standard cryptographic assumptions. As mentioned above, we abstract away the specifics of the underlying protocol and assume that the server observes exactly $Y^{(r)}$.
For vector summation, it is convenient to think of $G_r$ as $\mathbb{Z}_{q_r}^\ell$, i.e. length-$\ell$ vectors with integer entries mod $q_r$ (we might chose $q_r$ to be prime when we require division, e.g. in the IBLT construction).

If the protocol is \emph{interactive}, for $i \in B_r$, $Y_i$ is allowed to depend on $Y^{(1)}, \ldots, Y^{(r-1)}$. In this case,
 each $f_i$ is a function of $Y^{(1)}, \ldots, Y^{(r-1)}$. If the protocol is \emph{non-interactive}, $f_i$'s must be fixed independently from previous messages.
 
The server then must recover heavy hitters (and their frequencies) based on the transcript of the protocol, denoted by
 \[
    \Pi = (Y^{(1)}, \ldots, Y^{(R)}, U).
 \]

\new{
\subsection{Connection to other constrained settings.}

Below we discuss the connection between our stated setting to other popular constrained settings including the streaming setting, and the general communication constrained setting. 

\textbf{Connection to the streaming setting.} When $R = 1$, the setting is similar to the streaming setting (\eg in \citet{CORMODE200558}) since they both require that all information about the dataset must be ``compressed'' into a small state. One important difference is that in the distributed setting, the local data is processed independently at each user and only linear operation on the state is permitted due to the linear aggregation operation.
For the $R$-round case, our setting is different since the server observes $R$ states, each with bit length at most $\ell$. These states couldn't be viewed as a mega-state with bit length $R\cdot \ell$ since there is a further restriction that each sub-state (corresponding to the aggregated message observed in a round) can only contain information about data in the corresponding round instead of the entire data stream. Another naive way to reduce the problem to the streaming setting is to sum over all $R$ states and obtain a single state with $\ell$ bits. However, our result implies that this reduction is strictly sub-optimal
(see \cref{tab:results}). To the best of our knowledge, similar settings have not been studied in either the federated analytics literature or the streaming literature. 

\textbf{Connection to distributed estimation without linear sketching.} The general communication constrained setting where linear sketching is not enforced could be viewed as a special case of the proposed framework with $n = 1$ and $R = N$ since in this case, the linear aggregation is performed only over one user's message and hence trivial.

We list comparisons to these settings for the considered tasks in \cref{tab:results}. %
}

\section{Results and technique}\label{sec:results}

\new{We consider both approximate heavy hitter recovery and approximate histogram estimation in the linear aggregation model. 
We establish tight (up to logarithmic factors) communication complexity for both tasks in the single-round and multi-round settings.
The results are summarized in \cref{tab:results}. Our results have the following interesting implications on the communication complexity of these problems.}

\paragraph{Linear aggregation increases the communication cost.} As shown in \cref{tab:results}, under \linagg, for both tasks, the per-user communication would incur a linear dependence on $\ns = N/R$, the number of users in each round. On the other hand, without linear aggregation constraint, there won't be a linear dependence on $\ns$ since each user can simply send their $\nspu$ local samples losslessly using $O(\nspu \log d)$ bits.
The result establishes the fundamental cost of linear aggregation communication protocols for heavy hitter recovery.

\paragraph{\ahist~is harder than \ahh.} As mentioned before, a natural way to obtain heavy hitters is to obtain an approximate histogram and do proper thresholding to select the heavy elements. Although in the single-round case, there is at most a logarithmic gap between the communication complexity for the two problems. In the $R$-round case, our result shows that this is strictly sub-optimal. More precisely, the communication cost for $\thr$-\ahh~increases by a factor of $\sqrt{R}$ while that of \ahist~depends at most logarithmically in $R$.
This implies a gap between the per-user communication cost for $\thr$-\ahist~and $\thr$-\ahh~in the multi-round case. 

\new{\paragraph{The impact of $R$.} With a fixed total number of users $N$, our result shows that the per-user communication complexity decreases as $R$ increases. This is due to the fact that as $R$ increases, the linearity constraints are imposed over a smaller group of users with size $N/R$, and hence less stringent. However, this also comes at the cost that the privacy implication from aggregation becomes weaker.}

\subsection{Our technique - IBLT with local subsampling}
As discussed above, when solving the approximate heavy hitter problem in the multi-round setting, algorithms that rely on obtaining an approximate histogram and thresholding won't give the optimal communication complexity. In the paper, we propose to use invertible bloom lookup tables (IBLTs) \citep{Goodrich2011iblt} and local subsampling. At a high-level, IBLT is a bloom filter-type linear data structure that supports efficient listing of the inserted elements and their exact counts. The size of the table scales linearly with the number of unique keys inserted. To reduce the communication cost, we perform local threshold sampling \citep{Duffield2005threshold} on users' local datasets. This guarantees that the ``light'' elements will be discarded with high probability and hence won't take up the capacity of the IBLT data structure. Compared to frequency-oracle based approach, the variance of the noise introduced in our subsampling-based approach for each item is proportional to its accumulative count, which gives better estimates for elements with frequencies near the threshold. For elements with counts way above the threshold, the frequency estimate will have a larger error but this won't affect heavy hitter recovery since only whether the count is above $\thr$ is crucial to our problem. 
See detail of the algorithm in \cref{sec:ahh_upper}.

\subsection{Related work}
Linear dimensionality reduction techniques for frequency estimation and heavy hitter recovery has been widely studied to reduce storage or communication cost, such as Count-sketch, Count-min sketch ~\citep{charikar2002finding, CORMODE200558, donoho2006compressed, minton2014improved}, and efficient decoding techniques have also been proposed \citep{cormode2006combinatorial, gilbert2010approximate}.

The closest to our work is \new{that} of \cite{chen2022secagg}, which studies approximate histogram estimation under linear sketching constraint. The work also establishes gap between communication complexities with/without Secure Aggregation. However, their result is in a more restricted setting of $m = 1$ and $R = 1$. Moreover, our algorithm also has the advantage of being computationally efficient (runtime only depends logartihmically in $d$), which is important for applications with large support but sparse data. \new{Their work also considers algorithms that guarantee distributed differential privacy guarantees, which we leave as interesting future directions.}

\section{Approximate heavy hitter under linear aggregation}
\label{sec:ahh}
In this section, we study the approximate heavy hitter problem and show that the problem can be solved with per-user communication complexity $\tilde{O}\Paren{\frac{mn}{\thr} \log d}$, stated in \cref{thm:ahh}.

A natural comparison to make is the heavy hitter recovery algorithm obtained from getting a frequency oracle up to accuracy $\Theta(\thr)$. Since there are $R$ rounds, the naive approach would require an accuracy of $\Theta(\thr/R)$ in each round and classic methods such as Count-min and Count-sketch would require a per-user communication complexity of $\tilde{\Theta}(\nspu\ns R/\thr)$. In the $R$-round case, our result improves upon this by a factor of $R$. In fact, as we show in \cref{thm:ahist_lower}, any frequency oracle-based approach would require per-user communication complexity of at least $\Omega(\nspu\ns \sqrt{R}/\thr)$. Our result improves upon these and show that the dependence on $R$ is at most logarithmic.

\begin{theorem} \label{thm:ahh}
There exists a non-interactive linear sketching protocol with communication cost %
$\tilde{O}\Paren{\frac{mn}{\thr} }$
bits per user, %
which solves the \textbf{$\thr$-approximate heavy hitter}
problem. Moreover, the running time of the algorithm is
$\tilde{O}\Paren{\frac{mn}{\thr}}$. 
\end{theorem}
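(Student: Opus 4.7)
The plan is to instantiate the local-subsampling-plus-IBLT scheme outlined immediately after the theorem statement. Fix parameters $p := c_1 \log(Nm/\beta)/\tau$ for a sufficiently large absolute constant $c_1$, and $c := c_2 \cdot pmn \log(R/\beta)$ for a sufficiently large $c_2$. Every user $i \in B_r$ independently subsamples each occurrence in its local multiset with probability $p$ to form $\widetilde h_i$, and then inserts each surviving item into an IBLT with $c$ cells, three random hash functions drawn from the shared randomness $U$, and each cell a triple (count, keySum, keyCheck) living in $\mathbb{Z}_q^3$ for a prime $q = \mathrm{poly}(Nm,d)$. Since IBLT insertion is additive, the per-round secure aggregation exactly yields the IBLT of $\widetilde h^{(r)} := \sum_{i \in B_r}\widetilde h_i$; the server peels each round's IBLT, sums the recovered counts to form $\widetilde h^{[R]}$, and outputs $\heavy := \{x : \widetilde h^{[R]}(x) \geq \tfrac{2}{5} p\tau\}$.

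The analysis splits into two parts that I would bound separately and then combine by a union bound. First, subsampling concentration: for every item $x$ with $h^{[R]}(x) > 0$, the subsampled total is $\widetilde h^{[R]}(x) \sim \mathrm{Bin}(h^{[R]}(x), p)$, so a Chernoff bound with $c_1$ large enough ensures $\widetilde h^{[R]}(x) \geq \tfrac{3}{5}p\tau$ whenever $h^{[R]}(x) \geq \tau$ and $\widetilde h^{[R]}(x) \leq \tfrac{1}{3}p\tau$ whenever $h^{[R]}(x) \leq \tau/10$; a union bound over the at most $Nm$ distinct items that appear in the stream makes both conditions hold simultaneously with probability $\geq 1-\beta/2$, so thresholding at $\tfrac{2}{5}p\tau$ separates heavies from lights. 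Second, IBLT success: in round $r$, the number of distinct inserted keys is at most the total number of sampled occurrences, whose expectation is $pmn$ and which exceeds $2pmn$ with probability at most $\beta/(4R)$ by Chernoff; conditioned on this, the standard Goodrich--Mitzenmacher peeling decoder with $c = \Theta(pmn \log(R/\beta))$ cells recovers $\widetilde h^{(r)}$ exactly in time linear in $c$ with failure probability at most $\beta/(4R)$, and a union bound over $R$ rounds yields overall IBLT success with probability $\geq 1-\beta/2$.

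Combining the two events gives a correct $\tau$-\ahh{} solver with overall failure probability at most $\beta$. The per-user message is the IBLT, a vector in $\mathbb{Z}_q^{3c}$ occupying $O(c \log q) = \widetilde O(mn/\tau)$ bits (hiding factors polylogarithmic in $\tau, R, m, N, d, 1/\beta$), and the protocol is non-interactive because every user's encoding depends only on its own data and the shared randomness $U$. Each user spends $\widetilde O(m)$ time at encoding, while per-round decoding runs in $\widetilde O(mn/\tau)$ time, matching the bounds in the theorem statement.

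The main technical obstacle I expect is calibrating $p$ and $c$ so that the two high-probability events---Chernoff deviation over all $Nm$ stream items and IBLT peeling failure across all $R$ rounds---can simultaneously be absorbed into the target $\beta$ without eroding the log factors tolerated by $\widetilde O$. This is possible precisely because the subsampling noise on item $x$ has variance proportional to $h^{[R]}(x)$ itself rather than to the global mass, so items near the threshold only accrue $\Theta(\sqrt{p\tau})$ deviation, while the IBLT capacity needed to peel each round depends only on the distinct-key load $O(pmn)$ rather than on the total stream mass $mn$. It is this item-local noise behaviour, together with the sparsity of the sub-sampled tables, that avoids the multiplicative $\sqrt{R}$ or $R$ factors that a frequency-oracle-per-round reduction (as in \citet{chen2022secagg}) would incur.
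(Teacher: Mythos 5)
Your proposal follows the same core strategy as the paper's proof of \cref{thm:ahh}: subsample locally so that each round contributes only $\tilde{O}(mn/\thr)$ distinct keys, encode the subsampled histogram in a per-round IBLT (whose additivity makes the secure sum equal the round's merged table), peel each round, sum across rounds, and threshold. Where you diverge is in the sampler and the amplification. The paper uses threshold sampling at level $t=\thr/2$, so a recovered value is either $0$ or at least $\thr/2$; a single run then detects a heavy hitter with constant probability and admits a tail element with probability at most $2h^{[R]}(x)/\thr$ (\cref{lem:threshold,lem:single_run}), and correctness is boosted by $\rep=O(\log(mnR/(\thr\beta)))$ independent repetitions with a majority vote over the supports. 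You instead run a single copy with Bernoulli rate $p=\Theta(\log(Nm/\beta)/\thr)$, recover the exact subsampled counts, and threshold at $\tfrac{2}{5}p\thr$, using a Chernoff bound and a union bound over the at most $Nm$ stream items; the logarithmic factor sits in the sampling rate rather than in a repetition count, and both routes land at $\tilde{O}(mn/\thr)$ bits and decoding time. Your variant is in fact anticipated by the paper's remark that threshold sampling may be replaced by any unbiased sparsifying sampler such as binomial sampling, so I regard the argument as correct, with the error analysis organized differently (direct count concentration instead of per-run presence probabilities plus majority vote). Two small points to tighten: (i) choose the per-round IBLT capacity as $\Theta\paren{(pmn+\log(R/\beta))\cdot\mathrm{polylog}}$ rather than $\Theta(pmn\log(R/\beta))$, so the distinct-key bound holds even when $pmn$ is small (the paper's own analysis similarly assumes $mn>4\thr$); and (ii) since different users can insert the same key, the peeling step must use the duplicate-handling IBLT variant that divides the keySum and keyCheck fields by the count field, which your (count, keySum, keyCheck) cell format does support but your appeal to the ``standard'' decoder should make explicit.
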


The next theorem shows that the above communication complexity is minmax optimal up to logarithmic factors. 
\begin{theorem} \label{thm:ahh_lower}
    For any $\thr$ and \new{interactive linear sketching %
    protocol} $\cA$ with per-user communication cost $o\Paren{\frac{mn}{\thr}}$, there exists a dataset $h_i, i \in B_r, r \in [R]$, such that $\cA$ cannot solve $\thr$-heavy hitter (HH) with success probability at least 4/5.
\end{theorem}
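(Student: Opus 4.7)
The plan is to combine Yao's minimax principle with a Fano-type information-theoretic argument. I will embed $\Theta(nmR/\thr)$ independent uniform bits into the dataset through a ``presence/absence of a heavy hitter'' encoding, and then argue that the at-most-$R\ell$-bit transcript of any protocol solving $\thr$-\ahh~with probability $\ge 4/5$ must carry enough information about these bits to force $\ell = \Omega(nm/\thr)$, contradicting the hypothesis $\ell = o(nm/\thr)$.

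\textbf{Hard instance.} First I would set $k = \lfloor nm/(3\thr)\rfloor$ (the case $\thr > nm$ is vacuous) and reserve a distinct candidate item $x_{r,j}\in\cX$ for every $(r,j)\in [R]\times[k]$; this is feasible since $d \gg Nm \ge kR$. I then draw i.i.d.\ uniform bits $B^{(r)}_j \in \{0,1\}$, and for each $(r,j)$ with $B^{(r)}_j = 1$ allocate a disjoint group of $\lceil \thr/m\rceil$ users of $B_r$ to collectively hold $\thr$ copies of $x_{r,j}$ (each user holding $\le m$ copies); for $B^{(r)}_j = 0$ those users hold fresh throwaway items never repeated elsewhere. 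The choice of $k$ ensures that $k\cdot\lceil\thr/m\rceil \le n$ in both regimes $m\ge \thr$ and $m<\thr$. By construction, $h^{[R]}(x_{r,j}) = B^{(r)}_j\cdot\thr$ and every throwaway item has count $1\le \thr/10$, so any valid $\thr$-\ahh~output $\heavy$ must satisfy $B^{(r)}_j = \indic{x_{r,j}\in\heavy}$, making $B$ a deterministic function of $\heavy$.

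\textbf{Fano and transcript bound.} By Yao's principle, if the claimed protocol succeeds on every deterministic instance with probability $\ge 4/5$, it also succeeds on the random-$B$ instance with probability $\ge 4/5$. Let $\Pi = (Y^{(1)},\ldots,Y^{(R)},U)$ denote the transcript. Fano's inequality with $H(B)=kR$ and error probability at most $1/5$ yields $I(B;\Pi) \ge (4/5)kR - 1$. On the other side, since the shared randomness $U$ is independent of $B$, $I(B;\Pi) = I(B; Y^{(1)},\ldots,Y^{(R)}\mid U)$, and the chain rule gives $I(B; Y^{(1)},\ldots,Y^{(R)}\mid U) \le \sum_{r=1}^R H(Y^{(r)} \mid Y^{(<r)}, U) \le R\ell$ because each aggregate $Y^{(r)}$ lies in a ring of cardinality at most $2^\ell$. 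Combining the two bounds forces $\ell = \Omega(nm/\thr)$, contradicting the assumption.

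\textbf{Main obstacle.} The chief subtlety is handling \emph{interactivity}: in round $r$ the encoders $f_i$ may depend on $Y^{(<r)}$, so the per-round messages are not unconditionally independent and bounding $H(Y^{(1)},\ldots,Y^{(R)})$ naively by $R\ell$ does not immediately follow. The chain-rule step above resolves this, but only after carving off the shared randomness $U$ so that the bounded quantity is the \emph{conditional} mutual information $I(B;Y^{(1)},\ldots,Y^{(R)}\mid U)$. A secondary care-point is ensuring the hard instance is feasible across parameter regimes: when $m\ge \thr$ a single user encodes each bit, when $m<\thr$ a group of $\lceil\thr/m\rceil$ users encodes each bit, and in both cases the total user count remains within the round budget $n$ by our choice of $k$.
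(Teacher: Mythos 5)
Your proof is essentially correct, but it follows a genuinely different route from the paper. The paper proves \cref{thm:ahh_lower} by restricting to $R=1$ and reducing from multi-party set disjointness $\textsc{Dist}_{t,d}$ with $t=\thr$ and $d=mn/2$, invoking the known $\Omega(d/t)$ transcript lower bound and observing that under linear sketching the server's view is only $\ell$ bits. You instead give a direct packing/Fano argument: the correct $\thr$-\ahh{} output on your hard instance determines $kR$ independent bits with $k=\Theta(mn/\thr)$, while the server's whole view $(Y^{(1)},\ldots,Y^{(R)},U)$ carries at most $R\ell$ bits of information about them, since each aggregate lies in a ring of size at most $2^\ell$; the chain-rule step conditioned on $U$ correctly disposes of interactivity across rounds. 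Your argument needs no communication-complexity black box and exploits exactly the feature that makes \linagg{} costly (the per-round view is a single $\ell$-bit element), and it is robust to arbitrary cooperation among users within a round; the paper's reduction instead imports hardness that would also be meaningful in models where the server sees more than an $\ell$-bit aggregate (which is where pure counting, like yours, becomes too weak --- cf.\ the ``without \linagg'' column of \cref{tab:results}). Both proofs implicitly restrict the parameter regime (the paper takes $10<\thr<n/4$); your argument likewise needs $\thr$ at least a constant and $\thr \lesssim mn$ so that $k\ge 1$, which is fine.

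One slip to repair: your claim that $k\cdot\ceil{\thr/m}\le n$ ``in both regimes'' fails when $m>3\thr$, since then $\ceil{\thr/m}=1$ but $k=\floor{nm/(3\thr)}>n$, so disjoint one-user-per-bit groups do not fit in a round. The fix is immediate and does not change the counting: when $m\ge\thr$, let each user encode $\floor{m/\thr}$ candidate items (placing $\thr$ copies of $x_{r,j}$ or nothing according to $B^{(r)}_j$), which respects the contribution bound $m$ and still accommodates $k=\floor{nm/(3\thr)}$ bits per round. With that adjustment the construction is valid in all regimes and the Fano bound $R\ell\ge (4/5)kR-1$ yields $\ell=\Omega(mn/\thr)$ as you state.
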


Next we will present the protocol that achieves \cref{thm:ahh} in \cref{sec:ahh_upper} and discuss the proof of the lower bound \cref{thm:ahh_lower} in \cref{sec:ahh_lower}.

\label{sec:ahh_upper}

At a high level, the protocol relies on two main components: (i) a probabilistic data structure called Invertible Bloom Lookup Table (IBLT) introduced by \citet{Goodrich2011iblt}, and (ii) local subsampling. We start by introducing IBLTs, starting from the more standard (counting) Bloom filters. 

\paragraph{IBLT: Bloom filters with efficient listing.} Note that each user's local histogram $h_i$ can be viewed as a sequence of key-value pairs $(x, h_i(x))$.
The Bloom filter data structure
is a standard linear data structure to represent a 
set of key-value pairs with keys coming from a large domain. 
IBLT is a version of Bloom filter that %
supports an efficient listing operation -- while preserving the other nice properties of Bloom counting filters, namely linearity (and thus mergeable by summation), and succintness (linear size in number of indices it holds).\footnote{\new{In our algorithm, IBLT could be replaced by other data structures with these properties.}}
These properties are summarized in the following Lemma.

\begin{lemma}[\cite{Goodrich2011iblt}] \label{lem:iblt}
Consider a collection of local histograms $(h_i)_{i\in [n]}$ over $[d]$ such that $\norm{\sum_{i \in [\ns]} h_i}_0 \le L_0$.

For any $\gamma  > 0$, there exist local linear sketches $\{f_i\}_{i \in [\ns]}$ of length $\ell = \tilde{O}(\gamma L_0)$ and an $O(\ell)$ time decoding procedure $\deciblt(\cdot)$~such that 
\[
    \deciblt\paren{\sum_{i \in [\ns]} f_i(h_i)} = \sum_{i \in [\ns]} h_i
\]
 succeeds except with probability at most $O\Paren{L_0^{2 -\gamma}}$.
\end{lemma}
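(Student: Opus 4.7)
The plan is to instantiate the classical IBLT of Goodrich--Mitzenmacher as a linear map. I would use the shared randomness $U$ to sample $k = \Theta(\gamma)$ independent hash functions $H_1, \ldots, H_k : [d] \to [\ell']$ with $\ell' = \Theta(\gamma L_0)$, and give each cell $j \in [\ell']$ two registers: a count $C[j]$ and a key-weighted sum $K[j]$. The local sketch $f_i(h_i)$ is defined by, for every $x$ in the support of $h_i$ and every $t \in [k]$, incrementing $C[H_t(x)]$ by $h_i(x)$ and $K[H_t(x)]$ by $x \cdot h_i(x)$. Each register is a linear functional of $h_i$, so summation commutes with sketching and $\sum_i f_i(h_i)$ is exactly the IBLT of $\sum_i h_i$; after packing the two registers in $\mathbb{Z}_q$ for a prime $q = \text{poly}(d,n,m)$ the total bit length is $\tilde O(\gamma L_0)$.

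Next I would define $\deciblt$ to be the standard peeling procedure: repeatedly find a \emph{pure} cell $j$ with $C[j] \neq 0$ whose candidate key $\hat x := K[j]/C[j]$ lies in $[d]$ and satisfies $H_t(\hat x) = j$ for some $t$, record the pair $(\hat x, C[j])$, and subtract its contribution from the $k$ cells it occupies. Maintaining a queue of pure cells gives amortized $O(k)$ work per removal and hence total runtime $O(\ell)$. When peeling empties the table, the recovered multiset equals $\sum_i h_i$; the task therefore reduces to bounding the probability that peeling gets stuck.

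Correctness reduces to a classical hypergraph statement: map each distinct key $x$ in the support of the aggregate to the hyperedge $\{H_1(x), \ldots, H_k(x)\}$, and observe that peeling empties the table iff this random $k$-uniform hypergraph on $\ell'$ vertices has an empty $2$-core. I would then prove the quantitative tail bound directly by a union bound over potential $2$-cores: the number of $s$-vertex subsets is at most $\binom{\ell'}{s}$, and each of the at most $L_0$ hyperedges lands inside such a subset independently with probability at most $(s/\ell')^k$. Choosing $\ell' = c \gamma L_0$ for a large enough absolute constant $c$ and summing the resulting geometric-like series in $s \geq 2$ yields a bound of the form $O(L_0^{2-k})$, which with $k = \gamma$ matches the claimed $O(L_0^{2-\gamma})$.

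The step I expect to be the main obstacle is making this $2$-core tail bound quantitative with an exponent that grows \emph{linearly} in $\gamma$, rather than merely invoking an asymptotic peeling-threshold theorem. This forces me to track constants carefully and choose $\ell'/L_0$ comfortably above the peeling threshold $c^*(k)$ uniformly in $k$. A secondary technical point is ruling out \emph{spurious} pure cells, in which unrelated keys conspire to yield $K[j] = \hat x \cdot C[j]$ for some wrong $\hat x$; I would handle this by appending an extra hashCheck register under a pairwise-independent hash, which costs only $O(\log d)$ bits per cell and contributes at most $O(L_0^2/q)$ to the failure probability, negligible for the prime $q$ chosen above.
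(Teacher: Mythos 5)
Your proposal is correct and follows essentially the same route as the source of this result: the paper does not reprove \cref{lem:iblt} but imports it from \cite{Goodrich2011iblt}, and its appendix sketch (per-cell linear count/key-sum/hash-check registers, peeling, failure iff the associated random $k$-uniform hypergraph has a nonempty 2-core) is exactly the construction you reconstruct, with the $O(L_0^{2-\gamma})$ tail coming from the below-threshold core/stopping-set analysis. The only point to tighten is your union bound: for an $s$-vertex candidate core you must also choose the $t \ge 2s/k$ edges that fall inside it (giving a factor $\binom{L_0}{t}(s/\ell')^{kt}$), after which the dominant term ($s=k$, $t=2$, i.e.\ two edges hashing to the same $k$ cells) indeed yields $O(L_0^{2-k})$ for $\ell' = c\gamma L_0$ with $c$ large enough.
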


For the purpose of this paper we can focus on 
the two main operations supported by an IBLT instance $\mathcal{B}$ (see~\cite{Goodrich2011iblt} for details on deletions and look-ups):
\begin{itemize}
    \item $\texttt{Insert}(k, v)$, which inserts the pair $(k, v)$ into 
$\mathcal{B}$.
\item
$\texttt{ListEntries}()$,
which enumerates the set of 
key-value pairs in $\mathcal{B}$.
\end{itemize} Note that
$f_i(h_i)$ in Lemma~\ref{lem:iblt} corresponds to the IBLT $\mathcal{B}_i$ resulting from inserting the set $\{(x,h_i(x)) ~|~ h_i(x) > 0\}$ into an empty IBLT.
Also, $\texttt{ListEntries}()$
corresponds to $\deciblt$
in Lemma~\ref{lem:iblt}.

Finally, 
$\sum_i^n f_i(h_i)$ corresponds to 
the encoding of 
the IBLT resulting 
from inserting the set $\{(x, \sum_i^n h_i(x)) ~|~ \exists i\in[n]: h_i(x) > 0\}$ into an empty IBLT.
In other words, each client $i\in [n]$
computes {\em local}
IBLT $\mathcal{B}_i := f_i(h_i)$,
and the (secure) aggregation of the 
$\mathcal{B}_i$'s results in the
{\em global} IBLT 
$\mathcal{B}:= \sum_i^n f_i(h_i)$. Further details on IBLT are stated in \cref{sec:iblt_app}.

\paragraph{Reducing capacity via threshold sampling.}
The second tool in our main protocol is threshold sampling. Note that
the guarantee in \cref{lem:iblt} relies on the number of unique elements in $\sum_{i \in [\ns]} h_i$, which can be at most $\nspu \ns$ in the worst-case, leading to an $O(mn)$ worst-case communication cost, not matching our lower bound in Lemma~\ref{thm:ahh_lower}. For heavy hitter recovery, we reduce the communication cost by local subsampling. More precisely, we use the threshold sampling algorithm from \cite{duffield05}, detailed in \cref{alg:threshold_sampling}
to achieve the (optimal) dependency $O(mn/\tau)$. 

\new{
\begin{remark}
Threshold sampling can be replaced by any unbiased local subsampling method that offers sparsity, \eg binomial sampling where $p \cdot h'(x) \sim \text{Binomial}(h(x), p)$ for some $p \in (0,1)$, and similar theoretical guarantee will hold. In this work, we choose threshold sampling due to the property that it minimizes the total variance of $h'$ under an expected sparsity constraint (see \citet{duffield05} for details).
\end{remark}
}
\begin{algorithm}[h]
\caption{Threshold sampling.}
\begin{algorithmic}[1]
\STATE \textbf{Input:} $h:$ local histogram. $t \in \RR_{+}:$ threshold.

\FOR{$x \in {\rm supp}(h)$}
    \IF {$h(x) \ge t$, }
        \STATE $h'(x) = h(x)$.
    \ELSE
        \STATE
        \[
            h'(x) = \begin{cases}
                t & \text{ with prob } \frac{h(x)}{t}, \\
                0 & \text{ otherwise.}
            \end{cases}
        \]
    \ENDIF
\ENDFOR
\STATE \textbf{Return:} $h'$.
\end{algorithmic}
\label{alg:threshold_sampling}
\end{algorithm}

The protocol that achieves the desired communication complexity in Theorem~\ref{thm:ahh} is detailed in \cref{alg:subsample_IBLT}. 

\begin{algorithm}[h!]
\caption{Subsampled IBLT with \linagg.}
\begin{algorithmic}[1]
\STATE \textbf{Input:} $\{h_i\}_{i \in B_r, r \in [R]}:$ local histograms; $d:$ alphabet size; $R:$ number of rounds; $m:$ per-user contribution bound; $n:$ number of users per round; $\thr:$ threshold for heavy hitter recovery; $\beta:$ failure probability.
\STATE Let $t = \max\{\tau/2, 1\}$, $\rep = \ceil{10\log(\frac{4\nspu\ns R}{\tau\beta})}$ and $L_0 = 20 \frac{mn}{\tau} \log R, \gamma = \log R$.

\FOR{$r \in [R]$}
\FOR{$j \in [b]$}
\STATE Each user $i \in B_r$ applies \cref{alg:threshold_sampling} with threshold $t$ in to their local histogram with fresh randomness to get $h'_{i, j}$. \label{line:iblt_encoding}
\STATE Each user sends message
$
    Y_{i, j} = f_{i,j}(h'_{i, j})
$
where $f_{i, j}$'s are mappings from \cref{lem:iblt} with parameter $L_0, \gamma$ and fresh randomness.
\STATE Server observes $\sum_{i \in B_r} Y_{i,j}$ and computes $$\hat{h}_{r, j} = \deciblt (\sum_{i \in B_r} Y_{i,j}).$$
If the decoding is not successful, we set $\hat{h}_{r, j}$ be the all-zero vector.\label{line:iblt_decoding}
\ENDFOR
\ENDFOR
\FOR{$j \in [b]$}\STATE Server computes $\hat{h}^{[R]}_j = \sum_{r \in [R]} \hat{h}_{r,j},$
and obtain list 
\[
    \heavy_j = \{x \in [d] \mid  \hat{h}^{[R]}_j > 0\}.
\]
\ENDFOR
\STATE \textbf{Return: } 
\[
    \heavy = \{x \mid \sum_{j \in [\rep]} \idc{x \in \heavy_j} \ge \frac{\rep}{2} \}.
\]
\end{algorithmic}
\label{alg:subsample_IBLT}
\end{algorithm}

The algorithm can be viewed as $\rep \eqdef \ceil{20\log(\frac{40\nspu\ns R}{\tau\beta})}$ independent runs of a basic protocol, each of which returns a list $\heavy_i$ of potential heavy hitters. And the repetition is to boost the error probability.

In each basic protocol, users first apply \cref{alg:threshold_sampling} to subsample to the data, which reduces the number of unique elements while maintaining the heavy hitters upon aggregation. Then the user encodes their samples using IBLTs, whose aggregation is then sent to the server to decode. Since the number of unique elements is reduced through subsampling, the decoding of the aggregated IBLT will be successful with high probabiltiy, hence recovering the aggregation of subsampled local histograms. The detailed proof of \cref{thm:ahh} is presented in \cref{sec:proof_ahh}.

\section{Approximate histogram under linear aggregation}
\label{sec:ahist}
In this section, we study the task of obtaining an approximate histogram in the multi-round linear aggregation model. The first observation we make is that using \cref{alg:subsample_IBLT} with threshold $\dist$, we are able to return a list $H$ of heavy hitters such that with high probability, the list contains all $x$'s with frequency more than $\dist$ and no tail elements. The approximate histogram algorithm builds on this and further asks each user to send a linear sketching of the their unsampled local data alongside the IBLT data structures in \cref{alg:subsample_IBLT}. The server would then use the aggregation of these linear sketches as a frequency oracle to estimate the frequency of elements in $H$. 

The above protocol leads to near optimal performance in the single-round case.
However, the $R$-round case is trickier since the error will build up along all $R$ rounds and the naive application of the sketching algorithm will lead to an error that depends linearly in $R$. This can be solved by carefully designing the correlation among hash functions in all $R$ rounds and we show that the dependence on $R$ can be reduced to $\sqrt{R}$.
We further show that the $\sqrt{R}$ dependence is in fact optimal by proving a matching lower bound, stated in \cref{thm:ahist_lower}.

\new{To improve the dependence on $R$, we use the \textsc{HybridSketch} idea from \citet{wu2023private}}. More precisely, 
the location hashes are fixed across rounds while the sign hashes are generated with fresh randomness. The details of the algorithm are described in \cref{alg:ahist_r}. The proof follows from the guarantee in \cref{thm:ahh} and standard analysis for the Count-sketch algorithm. We defer the complete proof to \cref{sec:ahist_app}.  %

\begin{theorem}\label{thm:ahist_r}
      In the $R$-round setting, there exists a linear aggregation protocol with communication cost %
      \new{$\tilde{O}\Paren{
      \min\{ \frac{mn\sqrt{R}}{\dist}, mn\}}$}
      per user, which solves the \textbf{$\dist$-approximate histogram}
problem. Moreover, the running time of the algorithm is %
      \new{$\tilde{O}\Paren{
      \min\{ \frac{mn\sqrt{R}}{\dist}, mn\}}$}. 
\end{theorem}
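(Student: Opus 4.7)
The plan is to compose two subroutines: (i) the heavy-hitter recovery protocol of \cref{thm:ahh} instantiated with threshold $\dist$, and (ii) a Count-sketch-style frequency oracle built using the \textsc{HybridSketch} idea from \citet{wu2023private}, in which the location hashes are reused across all rounds while the sign hashes are drawn afresh in each round. First I would run \cref{alg:subsample_IBLT} in parallel, with threshold $\dist$, to obtain a set $\heavy$ that, with probability at least $1-\beta/2$, contains every $x$ with $h^{[R]}(x)\ge\dist$. For $x\notin\heavy$ the server outputs $\hat h^{[R]}(x)=0$; since $h^{[R]}(x)<\dist$ by the guarantee of \cref{thm:ahh}, the error is automatically at most $\dist$. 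It therefore suffices to estimate $h^{[R]}(x)$ to within $\dist$ for the at most $|\heavy|=O(\ns\nspu R/\dist)$ candidates.

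For the oracle, the shared randomness fixes $K=\Theta(\log(|\heavy|/\beta))$ two-wise independent location hashes $g_1,\dots,g_K:\cX\to[W]$ reused in every round, together with, for each round $r$, a fresh family of two-wise independent sign hashes $s^{(r)}_1,\dots,s^{(r)}_K:\cX\to\{\pm1\}$. Each user $i\in B_r$ sends the $K$ length-$W$ vectors $Y^{(r)}_{i,j}[b]=\sum_{y:\,g_j(y)=b}s^{(r)}_j(y)\,h_i(y)$ in a sufficiently large finite ring, so the secure aggregate exposes $C^{(r)}_j[b]=\sum_{y:\,g_j(y)=b}s^{(r)}_j(y)\,h^{(r)}(y)$. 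For each $x\in\heavy$ the server computes $\hat h_j(x)=\sum_{r=1}^R s^{(r)}_j(x)\,C^{(r)}_j[g_j(x)]$ and outputs the median of $\hat h_j(x)$ over $j\in[K]$.

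The analysis reduces to bounding the noise $N_j\eqdef\hat h_j(x)-h^{[R]}(x)=\sum_r\sum_{y\ne x:\,g_j(y)=g_j(x)}s^{(r)}_j(x)\,s^{(r)}_j(y)\,h^{(r)}(y)$. Because the sign hashes are fresh per round, every cross-term between distinct rounds has zero mean and the round-wise contributions add \emph{in variance} rather than coherently; combined with same-round 2-wise independence and averaged over the shared location hash, this yields $\EE[N_j^2]\le\frac{1}{W}\sum_{r=1}^R\sum_{y\ne x}(h^{(r)}(y))^2$. A careful tail analysis then exploits both the per-round $\ell_1$ budget $\|h^{(r)}\|_1\le\ns\nspu$ and the fact that every $y\notin\heavy$ satisfies $h^{[R]}(y)<\dist$ (so heavy-hitter colliders can be peeled off after a preliminary estimation pass), bounding the right-hand side by $\tilde O(\ns\nspu\,\dist\,\sqrt R)$. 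Setting $W=\tilde\Theta(\ns\nspu\sqrt R/\dist)$ drives each per-estimate variance below $\dist^2/\log^2(|\heavy|/\beta)$; Chebyshev plus the median-of-$K$ amplification and a union bound over $\heavy$ then yield uniform error at most $\dist$ with probability $1-\beta/2$.

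The per-user communication cost is the heavy-hitter IBLT cost $\tilde O(\ns\nspu/\dist)$ from \cref{thm:ahh} plus $K\cdot W=\tilde O(\ns\nspu\sqrt R/\dist)$ Count-sketch bits sent in the single round the user participates in; the running time is dominated by the decoding, also $\tilde O(\ns\nspu\sqrt R/\dist)$. When $\ns\nspu\sqrt R/\dist\ge\ns\nspu$ (\ie $\dist\le\sqrt R$), I would instead apply \cref{lem:iblt} per round with capacity $L_0=\tilde O(\ns\nspu)$ to recover each $h^{(r)}$ exactly and sum them, trivially meeting the $\dist$-accuracy goal at cost $\tilde O(\ns\nspu)$ per user; taking the minimum of the two strategies produces the stated bound. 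The main obstacle I anticipate is the tightness of the $\sum_r(h^{(r)}(y))^2$ bound: naive estimates such as $\|h^{(r)}\|_2^2\le(\ns\nspu)^2$ give only a linear-in-$R$ scaling, and recovering the $\sqrt R$ improvement is exactly where the per-round $\ell_1$ budget, the heavy-hitter peel-off, and the cross-round independence coming from freshly sampled signs must be exploited together, in the style of \citet{wu2023private}.
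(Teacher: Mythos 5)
Your overall architecture is the same as the paper's: run \cref{alg:subsample_IBLT} to get the candidate set $\heavy$ and output $0$ outside it, build a Count-sketch-style oracle whose location hashes are shared across rounds while the sign hashes are fresh per round, take medians over $O(\log(\cdot))$ repetitions, and fall back to exact per-round IBLT recovery (capacity $\tilde O(\nspu\ns)$) when $\dist\le\sqrt R$; the communication and runtime accounting also matches. The genuine gap is the step you yourself flag as the ``main obstacle'': the variance bound is never established, and the mechanism you commit to --- treating as ``heavy colliders'' only those $y$ with $h^{[R]}(y)\ge\dist$ --- cannot deliver the claimed $\tilde O(\nspu\ns\,\dist\sqrt R)$ bound. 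If a collider merely satisfies $h^{[R]}(y)<\dist$, the best available estimate is $\sum_r (h^{(r)}(y))^2\le \dist\, h^{[R]}(y)$, and summing against the total mass $\|h^{[R]}\|_1\le \nspu\ns R$ gives $\sum_r\sum_y (h^{(r)}(y))^2\le \dist\,\nspu\ns R$; with $W=\tilde\Theta(\nspu\ns\sqrt R/\dist)$ this is a per-estimate variance of order $\dist^2\sqrt R$, \ie error $\dist R^{1/4}$, and pushing it down to $\dist^2$ at this cutoff forces $W=\Omega(\nspu\ns R/\dist)$ --- exactly the linear-in-$R$ cost you are trying to avoid. The fresh signs and the per-round $\ell_1$ budget are already fully exploited in that computation (the worst case is a single just-below-threshold element spread evenly across rounds), so they cannot rescue the bound.

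The missing idea, and the way the paper closes this step, is to set the exclusion threshold at $\Theta(\thr/\sqrt R)$ rather than at $\thr$: the paper defines $H_{10\thr/\sqrt R}$, the set of elements with cumulative frequency at least $10\thr/\sqrt R$, argues that this set is small enough relative to the width $w$ that with probability $5/6$ none of its members lands in $x$'s bucket, and only then bounds the conditional variance by $\max_{y\notin H_{10\thr/\sqrt R}} h^{[R]}(y)\cdot \|h^{[R]}\|_1/w\le\thr^2$, finishing with Jensen/Markov, the median amplification, and a union bound over $\heavy$. Your alternative phrase ``heavy-hitter colliders can be peeled off after a preliminary estimation pass'' is not developed and is not obviously workable at the required cutoff: the elements above $\Theta(\thr/\sqrt R)$ can be far more numerous than the $\dist$-heavy hitters your sketch can reliably identify, so you would need to specify what is estimated, what is subtracted, and at what accuracy, which is a nontrivial departure from (and harder than) the paper's collision-avoidance argument. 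As written, the proposal reproduces the paper's algorithm but leaves the decisive quantitative step unproved, and the concrete plan it does state would fail by a $\sqrt R$ factor.
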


\begin{algorithm}[h]
\caption{$R$-round \ahist~with \linagg}
\begin{algorithmic}[1]
\STATE \textbf{Input:} $\{h_i\}_{i \in B_r, r \in [R]}:$ local histograms; $d:$ alphabet size; $R:$ number of rounds; $m:$ per-user contribution bound; $n:$ number of users per round; $\dist:$ error for approximate histogram; $\beta:$ failure probability.
\new{\IF{$\tau \le \sqrt{R}$}
\STATE Users implement \cref{alg:subsample_IBLT} with $\tau = 1$ and \textbf{return} the histogram obtained in Line 11.
\ENDIF}
\STATE Let $w = \ceil{ \frac{10\nspu \ns\sqrt{R}}{\dist}}$ and $\rep = \ceil{\log\Paren{\frac{4\nspu\ns R}{\tau \beta}}}$.

\STATE Get the same set of location hash functions $\{g_j: [d] \rightarrow [w]\}_{j \in [w]}$ for all rounds. And the independent sets of sign hashes $\{s_{j,r}: [d] \rightarrow \{\pm 1\}\}_{j \in [w], r \in [R]}$ across rounds.
\FOR{$r \in [R]$}
\STATE (\emph{In Parallel}) Each user $i \in B_r$ implements the protocol in \cref{alg:subsample_IBLT} and sends messages $Y_i$.

\STATE (\emph{In Parallel}) User $i \in B_r$ encode  $j \in [b]$ and $k \in [w]$,  
\[
    T_{i}(j,k) =\sum_{x}\indic{ \hash_{j}(x) = k} s_{j, r}(x) \cdot h_i(x).
\]
\ENDFOR
\STATE Server obtains a list $H$ of heavy hitters from the the messages $Y_i$'s.

\STATE Server obtains $\forall r \in [R], T^{(r)} = \sum_{i \in B_r} T_i$ and constructs $\hat{h}$, where $\forall x \in H$
\[
   \hat{h}(x) = {\rm Median}\Paren{ \{\sum_{r \in [R]} T^{(r)}(j, \hash_{j}(x)) \cdot s_{j, r}(x)  \}_{j \in [\rep]}},
\]
and $\forall x \notin H, \hat{h}(x) = 0$.
\STATE \textbf{Return} $\hat{h}.$
\end{algorithmic}
\label{alg:ahist_r}
\end{algorithm}

\paragraph{Lower bound for \ahist} %
We prove the following lower bound on \ahist, which shows that the bound in \cref{thm:ahist_r} is tight up to logarithmic factors, establishing the seperation between the sample complexities from \ahh~and \ahist.

\begin{theorem}
\label{thm:ahist_lower}
      For any $\dist$ and $R$-round \ahist~protocol with per-user communication cost \new{$o\Paren{\min\{\frac{mn\sqrt{R}}{\dist}, mn\}}$}, there exists a dataset $\{ h_i\}_{i \in B_r, r \in [R]}$, such that the protocol cannot solve $\dist$-\textbf{approximate histogram} with error probability at most 1/5.
\end{theorem}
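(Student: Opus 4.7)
The plan is to reduce the multi-user linear sketching setting to an $R$-round distributed estimation setting without the linear sketching constraint via a ``super-user'' argument, and then prove the lower bound in the reduced setting using a Hellinger-based direct product argument.

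Given any interactive $R$-round linear sketching protocol $\mathcal{P}$ with per-user communication $\ell$, I would construct an equivalent protocol $\mathcal{P}'$ in which, in each round $r \in [R]$, a single ``super-user'' holds $nm$ samples and sends $\ell$ bits to the server without any linear aggregation constraint. The super-user in round $r$ partitions its $nm$ samples into $n$ subsets of size at most $m$, computes the $n$ messages $Y_1, \ldots, Y_n$ according to the encoding functions of $\mathcal{P}$ (using the shared randomness $U$ and, for interactive $\mathcal{P}$, the previously broadcast sums $Y^{(1)}, \ldots, Y^{(r-1)}$), and sends $Y^{(r)} = \sum_i Y_i$. Since the server's view in $\mathcal{P}$ and $\mathcal{P}'$ is identical, the success probability and the message length are preserved, and it suffices to lower-bound the communication in the reduced setting.

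For the reduced setting I would use the following hard distribution. Set $c = \max(\tau/\sqrt{R},\, 1)$ and $d' = \lfloor nm/c \rfloor$, so $d' = \Theta(\min\{nm\sqrt{R}/\tau,\, nm\})$. Fix any $V \subseteq [d]$ with $|V| = d'$. For each $x \in V$ and $r \in [R]$, independently draw $\theta_{x,r} \in \{0,1\}$ uniform, and set the round-$r$ count of $x$ to $c\theta_{x,r}$. The per-round mass $\sum_x c\theta_{x,r}$ concentrates around $cd'/2 = nm/2$ and is at most $nm$ with high probability, meeting the user budget. The aggregate $h^{[R]}(x) = c\sum_r \theta_{x,r}$ has standard deviation $c\sqrt{R}/2 \ge \tau/2$, so any $\tau$-accurate estimator must reliably recover the predicate $\sigma_x := \mathbbm{1}\left[\sum_r \theta_{x,r} \ge R/2\right]$ for at least a constant fraction of $x \in V$.

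The lower bound then follows from a Hellinger tensorization across rounds: conditioned on $\sigma_x \in \{0,1\}$, each $\theta_{x,r}$ has bias of order $1/\sqrt{R}$ from uniform, so the round-$r$ squared Hellinger distance between the (conditional) transcript distributions under $\sigma_x = 0$ versus $\sigma_x = 1$ is at most $O(I_{r,x}/R)$, where $I_{r,x}$ denotes the round-$r$ mutual information between the transcript and $\theta_{x,r}$. Reliable prediction of $\sigma_x$ therefore requires $\sum_r I_{r,x} \ge \Omega(R)$; summing over $x \in V$ and using $\sum_x I_{r,x} \le \ell$ per round yields $R\ell \ge \Omega(d' R)$, hence $\ell \ge \Omega(d') = \Omega(\min\{nm\sqrt{R}/\tau,\, nm\})$, which is the claimed bound. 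The main obstacle will be making the Hellinger tensorization rigorous in the interactive setting (where the round-$r$ message depends on the history), since a naive Fano argument on the $\theta_{x,r}$'s alone would yield only $\ell \ge \Omega(nm/(\tau\sqrt{R}))$, off by a factor of $R$; the extra factor reflects that resolving the threshold predicate $\sigma_x$ requires pooling information across all $R$ rounds, with each round's contribution discounted by $1/\sqrt{R}$.
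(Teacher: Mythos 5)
Your overall plan --- collapse each round into a single super-user and then prove a per-coordinate binary-testing lower bound in which each round carries only a $1/\sqrt{R}$-sized signal, controlled by a round-wise information/Hellinger (strong data processing) argument --- is the same strategy as the paper's. But your specific hard instance creates two genuine gaps. First, the decoding step fails as stated: with unbiased coins $\theta_{x,r}$ and counts $c\,\theta_{x,r}$, a $\tau$-accurate estimate of $h^{[R]}(x)=c\sum_r\theta_{x,r}$ pins down the majority bit $\sigma_x$ only when $\absv{\sum_r\theta_{x,r}-R/2} > \tau/c = \sqrt{R}$, i.e.\ when the binomial deviates by roughly two standard deviations, which happens for only a small constant fraction of coordinates; for the remaining $\approx 95\%$ the accuracy guarantee forces nothing about $\sigma_x$, so ``any $\tau$-accurate estimator must reliably recover $\sigma_x$'' is false, and the Fano step has to be reworked around a small per-coordinate advantage rather than reliable recovery. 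Second, the step you yourself flag as the main obstacle is a real obstruction: conditioned on $\sigma_x$ (a function of all $R$ coins), the $\theta_{x,r}$ are no longer independent across rounds, so the claimed per-round bound $O(I_{r,x}/R)$ and its tensorization across rounds are not justified as written, particularly for interactive protocols where round-$r$ messages depend on the history.

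The paper's proof avoids both problems by planting an explicit independent bias bit: for each coordinate $i$ it draws $Z_i$ uniform in $\{\pm 1\}$ and, i.i.d.\ across rounds given $Z$, sets the round-$r$ count to $mn/(5\ell)$ with probability $1/2 + 10Z_i/\sqrt{R}$ and to $0$ otherwise (with a paired coordinate absorbing the remaining mass). Then the gap in the aggregate mean between the two hypotheses is $\Theta(mn\sqrt{R}/\ell)$, which is $\omega(\tau)$ whenever $\ell = o(mn\sqrt{R}/\tau)$, so a $\tau$-accurate histogram decodes each $Z_i$ except with probability $6/25$ (algorithm failure plus Hoeffding), giving $\sum_i I(Z_i;\Pi) \ge 2\ell$; and because the rounds are genuinely i.i.d.\ Bernoulli with bias $10/\sqrt{R}$ conditioned on $Z$, standard communication-constrained product-Bernoulli estimation bounds apply off the shelf and give $\sum_i I(Z_i;\Pi) \le R\cdot(1/\sqrt{R})^2\cdot\ell = \ell$, a contradiction. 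To salvage your unbiased-coins-plus-majority construction you would essentially have to re-derive both of these facts by hand (an advantage-based Fano step and a conditional tensorization for exchangeable, non-product rounds), which is exactly the work the planted-bias construction is designed to make unnecessary.
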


\section{Practical adaptive tuning for instance-specific bounds} \label{sec:adaptive}
In practical scenarios, the per-user communication cost $\ell$ is often determined by system constraints (\eg delay tolerance, bandwidth constraint) and the goal is to recovery heavy hitters with the small enough $\thr$ under a fixed communication cost $\lmax$. While we have shown in \cref{thm:ahh_lower}, in the worst case, we can only reliably recover heavy hitters with frequency at least $\Omega(\frac{\nspu\ns}{\lmax})$. However, since the successful decoding of IBLTs only requires the number of \emph{unique} elements in a round to be small, when users' data is more favorable, it is possible to obtain better instance-specific bounds when the data is more concentrated on ``heavy'' elements.

When interactivity across rounds is allowed, we give an adaptive tuning algorithm for the subsampling parameter, which can be implemented when interactivity is allowed. The details of the algorithm are described in \cref{alg:adaptive_iblt}. At a high level, our algorithm is based on an estimate for $\norm{\sum_{i \in B_r} h_i'}_0$ where $h_i'$s are the subsampled histograms. When the decoding is successful, we can compute $\norm{\sum_{i \in B_r} h_i'}_0$ exactly from the recovered histogram. When the decoding is not successful, we rely on an analysis based on the ``core size'' of a random hypergraph~\citep{molloy2005cores} introduced by the hashing process to get an estimate of $\norm{\sum_{i \in B_r} h_i'}_0$. We discuss this in details in \cref{sec:iblt_app}.
Under the assumption that for a fixed subsampling parameter $t$, $\norm{\sum_{i \in B_r} h_i'}_0$  will be relatively stable across rounds, we can then increase/decrease $t$ based on past estimates of the data process.

We will empirically demonstrate the effectiveness of our tuning procedure. We leave proving rigorous guarantees on the adaptive tuning algorithm as an interesting future direction.

\begin{algorithm}[h]
\caption{Adaptive subsampled IBLT}\label{alg:adaptive_iblt}
\begin{algorithmic}[1]
\REQUIRE{Communication budget $C$, number of users $\ns$, user contribution bound $\nspu$. \\
\textbf{Update}: A tuning function that updates the subsampling parameter based on past observations.}
\STATE Set $
    t_0 = \Theta\Paren{\frac{\ns m}{C}}.
$
\FOR{$r = 0 , 1, 2, \ldots, R$}
        \STATE Each user $i \in B_r$ applies \cref{alg:threshold_sampling} with threshold $t$ in to their local histogram with fresh randomness to get $h'_{i}$. \label{line:iblt_encoding}
\STATE Each user sends message
$
    Y_{i} = f_{i}(h'_{i})
$
where $f_{i}$'s are mappings from \cref{lem:iblt} with parameter $L_0, \gamma$ and fresh randomness.
\STATE Server observes $\sum_{i \in B_r} Y_{i}$ and computes $$\hat{h}_{r} = \deciblt \paren{\sum_{i \in B_r} Y_{i}}$$
If the decoding is not successful, we let $\hat{h}_{r, j}$ be the all-zero vector.\label{line:iblt_decoding} 
\IF{The decoding is successful,}
    \STATE Set $\hat{s}_r = \norm{\hat{h}_{r}}_0$.
    \ELSE
        \STATE Get an estimate $\hat{s}_r$ for $\norm{\sum_{i \in B_r} h_i'}_0$ based on $\sum_{i \in B_r} Y_{i}$ using \eqref{eq:cardinality-from-core} and \eqref{eq:cardinality-from-core-2} (\cref{sec:iblt_app}). 
\ENDIF
        \STATE Set
        \[
        t_{r + 1} = \textbf{Update}(t_r, C, \hat{s}_r).
        \]
\ENDFOR
\end{algorithmic}
\end{algorithm}

\section{Experiments} \label{sec:exp}
In this section, we empirically evaluate our proposed algorithms (\cref{alg:subsample_IBLT,alg:adaptive_iblt}) for the task of heavy hitter recovery and compare it with baseline methods including (1) Count-sketch based method; (2) IBLT-based method without subsampling (\cref{alg:subsample_IBLT} with $\tau = 1$). We measure communication cost 
in units of words (denoted as $\size$) and each word unit is an {\rm int16} object (can be communicated with 2 bytes) in python and $C$++ for implementation purposes. 
We will mainly focus on string data with characters from $\rm ROOT$ consisting of lower-case letters, digits and special symbols in $\{'~@~\#~-~;~*~:~.~/~\_\}$. %
Below are the details of our implementation. 

\paragraph{Count-median sketch.} We use $\length$ hash functions, each with domain size $\width$ and the total communication cost is $\size = \length\cdot\width$ words\footnote{In our experiments, $\nspu\ns$ will be between $\sim1000$ and $\sim10, 000$, and hence one word is enough  to store an entry in the sketch.}. In the $R$-round setting, for each round $r$, we loop over all $x \in \cX$ and compute an estimate $\hat{h}_r(x)$ and the recovered heavy hitters are those with $\sum_{r}\hat{h}_r(x) \ge \tau$. Note that in the open-domain setting, $d = |\cX|$ can be large and this decoding procedure can be computationally infeasible. There are more computationally feasible variants including tree-based decoding but these come at the cost of higher communication cost or lower utility. We stick to the described version in this work and show that our proposed algorithms outperform this computationally expensive version. The advantage will be be at least as large when comparing to the more computationally feasible versions. %

\paragraph{IBLT-based method.} 
In our experiment, each IBLT data structure is of size $8L_0$, where $L_0$ is the targeted capacity for IBLT.
We state more details about how the size is computed in \cref{sec:iblt_app}.

We consider fixed subsampling and adaptive subsampling. 
For fixed subsampling, when the max number of items in each round is upper bounded by $\mmax$, we set the subsampling parameter in \cref{alg:threshold_sampling}, to be $ t = \max\{1, \min\{ \frac{\mmax}{L_0}, \frac{\thr}{2}\} \}$.  In practice, $\mmax$ can be obtained by system parameters such as the number of users in a round and the maximum contribution by a single user.
Setting $t\le \thr/2$ guarantees that the heavy hitters will be kept with high probability ( \cref{lem:threshold}). And setting $t = \frac{\mmax}{L_0}$ guarantees that with high probability, the decoding of IBLT in each round will succeed and we can obtain more information. We set $b = 1$ in our experiments, the estimated and the heavy hitters are defined as those with estimated frequency at least $\thr$.
In the adaptive algorithm (\cref{alg:adaptive_iblt}), for the update rule, we use
\[
    t_{r + 1} =
    0.5 t_r + 0.5 t_r \times \frac{\hat{s}_r}{C}\]
We leave designing better update rules as future work.  

\paragraph{Client data simulation.} We take the ground-truth distribution of strings in the Stackoverflow dataset in Tensorflow Federated and truncate them to the first $3$ characters in set $\rm ROOT$. This is to make sure that the computation is feasible for Count-median Sketch. And the data universe is of size $d = 97336$. In each round, we take $M_r \sim \cN(M, M/10)$ \iid~samples from the this distribution and encode them using the algorithms mentioned above. In the experiment, we assume all samples come from different users ($\nspu = 1$). For Count-sketch, this won't affect the performance. For IBLT with threshold sampling, this will be equivalent to IBLT with binomial sampling. And by \cite{Duffield2005threshold}, this will only increase the variance of the noise introduced in the sampling process. %
The metric we use is the F1 score of real heavy hitters (heavy hitters with true cumulative frequency at least $\thr$) and the estimated heavy hitters.

We take $R \in \{10, 30, 50, 100\}$, $\thr \in  \{20, 50,$ $100, 200\}$, $M \in \{1000, 2000, 5000, 10000 \}$ and $C \in \{100, 200, 500, 1000, 2000, 5000, 8000, 10000, 20000,$ $30000, 40000\}$. For Count-median method, we take the max F1 score over all $H \in \{5, 7, 9, 11\}$ for each communication cost. We run each experiment for 5 times and compute the mean and standard deviation of the obtained F1 score. Our proposed algorithms consistently outperforms the sketching based method. Below we list and analyze a few representative plots. 

In \cref{fig:communication_m10000}, we plot the F1 score comparison under different communication costs when $R = 30, \thr = 50, M = 10000$. It can be seen that our proposed algorithms significantly outperforms the Count-sketch method. Among the IBLT-based methods, Subsampled IBLT with adaptive tuning is performing the best. For non-interactive algorithms, subsampled IBLT with fixed subsampling probability is better compared to the unsampled counter part for a wide range of capacity. 

\begin{figure}[h]
\centering
\includegraphics[width = 0.45\textwidth]{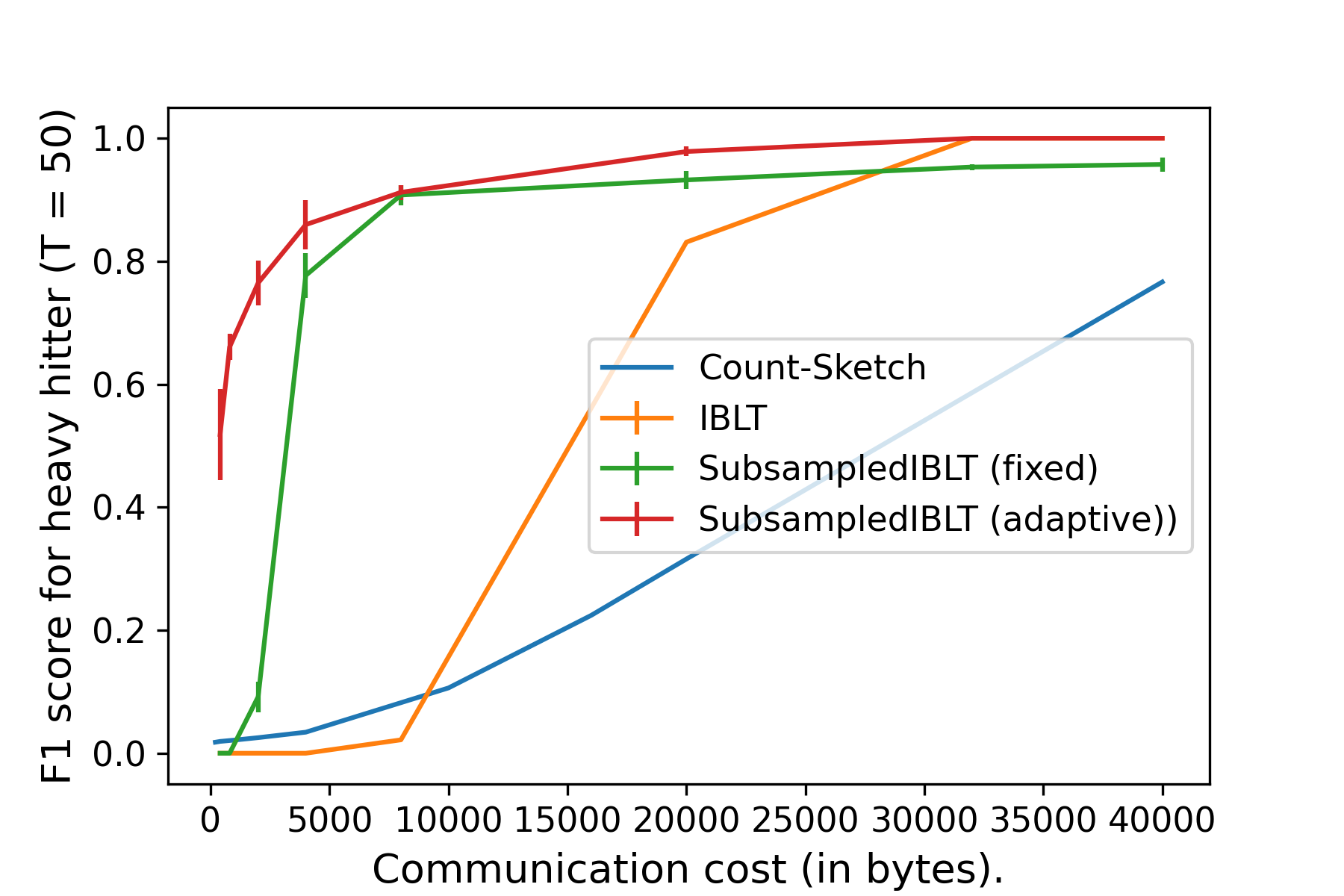}
\caption{F1 score comparison under different communication cost ($R = 30, \thr = 50, M = 10000$). Each F1 score is an average of 5 runs and the error bar represents 3x the standard deviation.}
\label{fig:communication_m10000}
\end{figure}

In \cref{fig:round_m10000}, we plot the F1 score comparison under different round numbers when $C = 10000, \thr = 200, M = 10000$. As we can see, the performance of Count-sketch decreases significantly when the number of rounds increase while the performance of IBLT-based methods remains relatively flat, which is consistent with the theoretical results\footnote{The communication complexity of SubsampledIBLT is $\tilde{\Theta}(mN/\tau R) =\tilde{\Theta}(mn/\tau)$, which depends on $R$ at most logarithmically when $\ns$ is fixed}. The slight increase in the F1 score when $R$ increases might be due to the \iid~generating process of the data in each round. As $R$ increases, we get more information about the underlying distribution and this effect outweighs additional noise introduced by multiple rounds. Better understanding of this effect is an interesting further direction.

\begin{figure}[h]
\centering
\includegraphics[width = 0.45\textwidth]{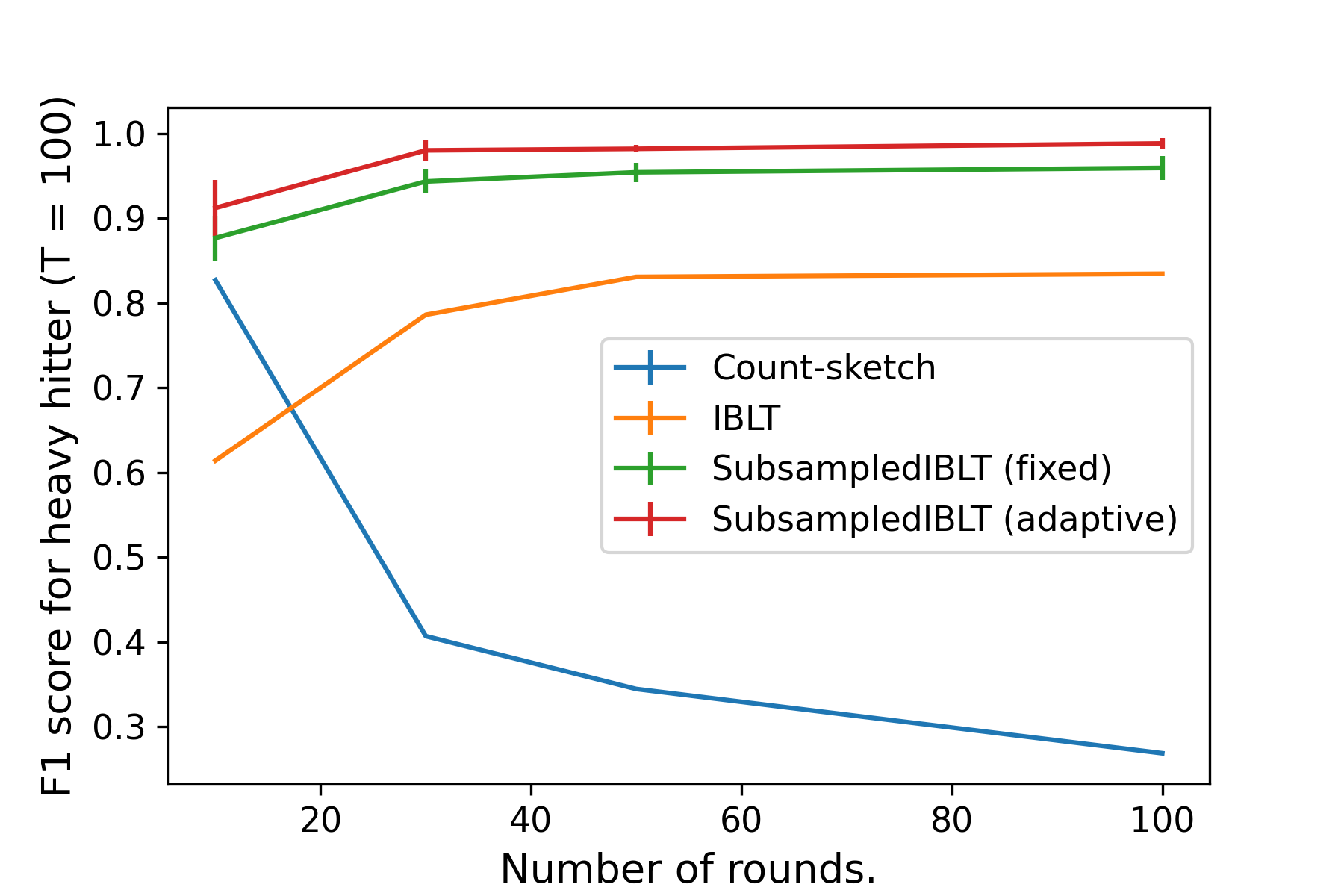}
\caption{F1 score comparison with different number of rounds ($\thr = 100, M = 10000, C = 10000$). Each F1 score is an average of 5 runs and the error bar represents 3x the standard deviation of the runs. }
\label{fig:round_m10000}
\end{figure}

In \cref{fig:adaptive_m10000}, we further demonstrate our adaptive tuning method by showing that it is comparable with the best possible subsampling parameter in a candidate set. More specifically, we run subsampled IBLT with $t \in \{  1,   1.25,   2  ,   5  ,  10  ,  20  ,  50  , 100   \}$ for all communication costs. And the F1 score for SubsampledIBLT (best fixed) is defined as the best F1 score among these candidates. Our result shows that the performance of tha adaptive algorithm is in-par with the best fixed subsampling parameter. It outperforms the best fixed subsampling parameter in certain cases because the set of subsampling parameters we choose from has limited granularity and hence the adaptive algorithm might find better parameters for the underlying instance.

\begin{figure}[h]
\centering
\includegraphics[width = 0.45\textwidth]{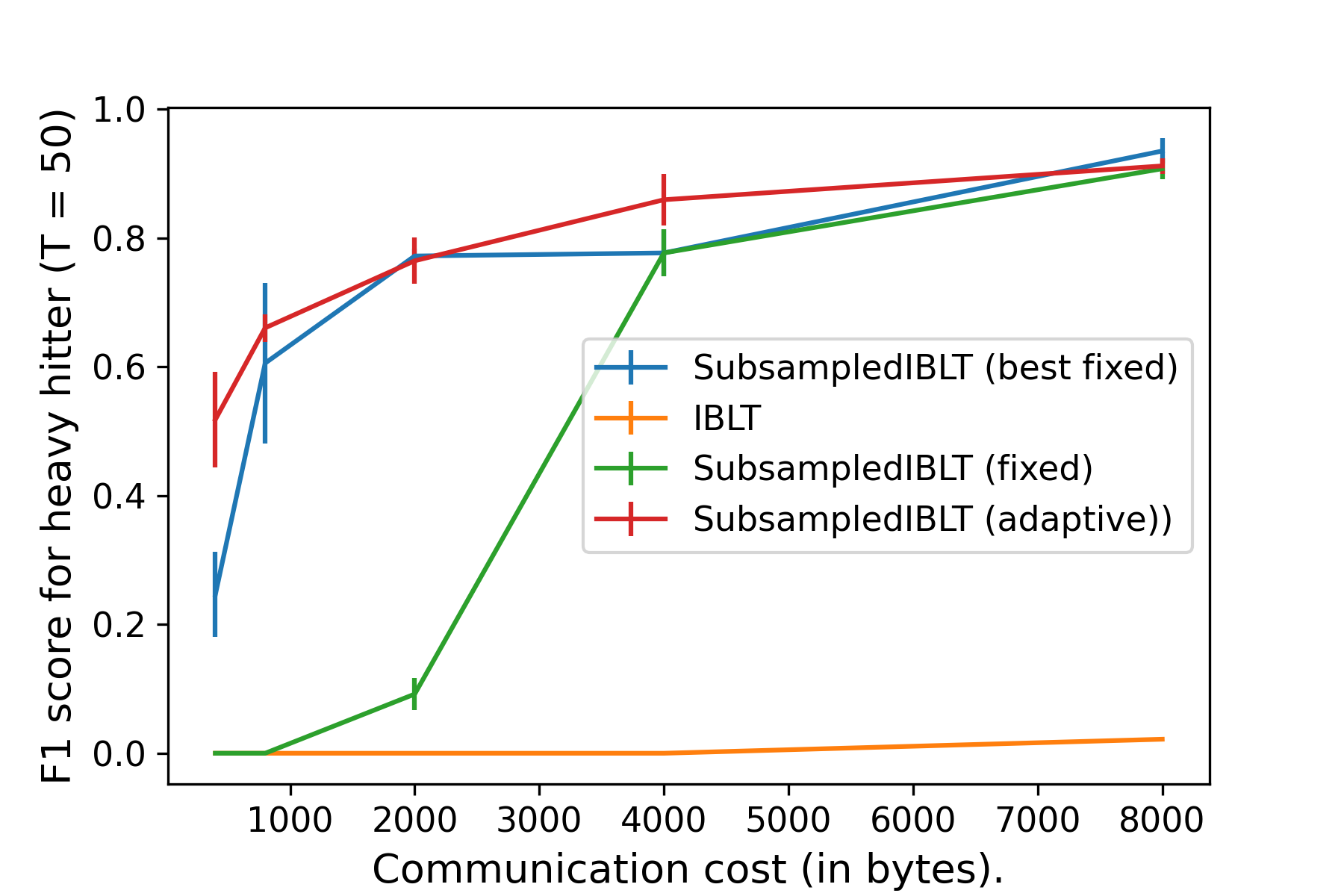}
\caption{F1 score comparison (adaptive vs best fixed probability) ($\thr = 50, M = 10000, C = 5000$). Each F1 score is an average of 5 runs and the error bar represents 3x the standard deviation of the runs.}
\label{fig:adaptive_m10000}
\end{figure}
\section{Conclusion}
We provided lower bounds and matching upper bounds for central tasks in multi-round distributed data analysis: heavy hitters recovery and approximate histograms over large domains. Our findings show how porting single-round approaches based on standard sketching does not achieve optimality, and how this can be cleverly achieved via subsampled IBLTs. Several interesting and non-trivial questions remain to be addressed, including (a) developing distributed differential privacy schemes that are provably optimal for this problem, and (b) developing (non-linear) cryptographic (or other secure) primitives that allow us to extract heavy hitters with smaller (sublinear in $mn$) communication.

\new{
\section{Acknowledgments}
The authors thank Wennan Zhu for early discussions on the work, and 
Badih Ghazi, Ravi Kumar, Pasin Manurangsi, Rasmus Pagh, Amer Sinha, and Ameya Velingker for proposing IBLT as the linear data structure in federated heavy hitter recovery.}

\bibliography{references}
\bibliographystyle{plainnat}

\newpage
\appendix 
\onecolumn
\section{Proof of \cref{thm:ahh}}
\label{sec:proof_ahh}
Note that the algorithm can be viewed as $\rep \eqdef \ceil{20\log(\frac{40\nspu\ns R}{\tau\beta})}$ independent runs of a basic protocol, each of which returns a list $\heavy_i$ of potential heavy hitters. We assume $b \ge 260$, else we take $b' = \max\{b, 260\}$ and the result will change by at most a constant factor.

The next lemma states that the probabilities of heavy elements and tail elements falling in the list.
\begin{lemma} \label{lem:single_run}
    All $\heavy_j$ defined in \cref{alg:subsample_IBLT} satisfy that, if $h^{[R]}(x) \ge \thr$,
    \[
    \probof{x \in \heavy_j} \ge 4/5.
    \]
    Else if $h^{[R]}(x) \le \thr/10$, 
    \[
        \probof{x \in \heavy_j} \le \frac{2h^{[R]}(x)}{\thr}.
    \]
\end{lemma}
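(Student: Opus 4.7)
The plan is to bound $\Pr(x \in \heavy_j)$ in both regimes via a decomposition based on (i) whether threshold sampling produces a positive count for $x$ at some pair $(r,i)$, and (ii) the event $\cF$ that every round's IBLT decoding succeeds. The key observation is that $\hat{h}_{r,j}$ is always nonnegative (equal to $\sum_{i \in B_r} h'_{i,j}$ on decoding success and to $0$ on failure), which gives the sandwich
\[
\{\exists (r,i): h'_{i,j}(x) > 0\} \cap \cF \;\subseteq\; \{x \in \heavy_j\} \;\subseteq\; \{\exists (r,i): h'_{i,j}(x) > 0\}.
\]

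For the tail inequality, the right containment suffices: since $h_i(x) \le h^{[R]}(x) \le \thr/10 < t$ for every user, the threshold sampler is always in its random branch and $\Pr(h'_{i,j}(x) > 0) = h_i(x)/t$, so a union bound over $(r,i)$ yields
\[
\Pr(x \in \heavy_j) \;\le\; \sum_{r,i} \frac{h_i(x)}{t} \;=\; \frac{h^{[R]}(x)}{t} \;\le\; \frac{2 h^{[R]}(x)}{\thr},
\]
using $t \ge \thr/2$.

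For the heavy inequality I would use the left containment and separately bound the two failure modes. The sampling failure $\Pr(\forall (r,i): h'_{i,j}(x)=0)$ is zero whenever some user has $h_i(x) \ge t$; otherwise independence across $(r,i)$ and $1-p \le e^{-p}$ yield $\prod (1-h_i(x)/t) \le \exp(-h^{[R]}(x)/t) \le e^{-2}$, using $h^{[R]}(x) \ge \thr \ge 2t$. The per-round decoding failure is at most $\Pr(\norm{\sum_i h'_{i,j}}_0 > L_0) + O(L_0^{2-\log R})$ by \cref{lem:iblt}; the subsampled support has expectation at most $4mn/\thr$ (each user contributes $\le 2 m_i/t$ unique keys in expectation), a factor $5\log R$ below the capacity $L_0 = 20mn\log R/\thr$, so a Chernoff bound on the normalized independent variables $\norm{h'_{i,j}}_0/m \in [0,1]$ drives this tail well below $1/R$, while $L_0 = \Omega(\log R)$ makes $L_0^{2-\log R}$ polynomially small in $R$. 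A union bound across the $R$ rounds therefore gives $\Pr(\cF^c) \le 1/20$, whence $\Pr(x \notin \heavy_j) \le 1/20 + e^{-2} \le 1/5$.

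The main obstacle is the joint calibration of the IBLT parameters $(L_0,\gamma)$: they must be large enough that the total decoding failure across all $R$ rounds, combined with the irreducible sampling term $e^{-2}$, still fits under $1/5$, while remaining small enough not to worsen the communication bound in \cref{thm:ahh}. A secondary subtlety is handling the corner case $\thr < 2$ (where $t = 1$ removes randomness from threshold sampling entirely and forces the deterministic branch whenever $h^{[R]}(x) \ge 1$) and the regime $\thr \gtrsim mn$ (where $L_0 \gtrsim mn$ makes the Chernoff step vacuous, since the trivial support bound $mn$ already suffices); in both, the argument simplifies rather than breaks.
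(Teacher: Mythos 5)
Your proposal follows essentially the same route as the paper's proof: the same split into ``sampling survival'' and ``all-rounds decoding success'', the same tail bound $\Pr(h'_{i,j}(x)>0)\le h_i(x)/t$ summed over users (the paper's \cref{lem:threshold}, written there as a product expansion), the same $1-p\le e^{-p}$ argument giving the $e^{-2}$ failure term for heavy elements via $h^{[R]}(x)\ge\thr\ge 2t$, and the same union bound over the $R$ per-round IBLT decodings using \cref{lem:iblt} with $\gamma=\log R$. Your handling of the corner cases ($\thr<2$, $\thr\gtrsim mn$) is also consistent with the conditions the paper quietly assumes ($mn>4\thr$, $R>32$).

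The one step that genuinely differs is the concentration bound certifying that the post-sampling support fits within the IBLT capacity $L_0$, and as you wrote it that step fails in part of the parameter range. You apply a Chernoff bound to the per-user normalized variables $\norm{h'_{i,j}}_0/m\in[0,1]$, whose normalized total mean is $O(n/\thr)$, against the normalized threshold $L_0/m=20(n/\thr)\log R$. When $\thr\gg n$ (few users per round, each holding many items; e.g. $n=1$, or $m=10^6$, $n=10$, $\thr=10^4$), this threshold is smaller than the range of a single summand, and a $[0,1]$-bounded Chernoff argument degenerates to essentially Markov, giving only a constant bound rather than the $O(1/R)$ per-round failure probability your union bound over rounds requires. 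The claim itself is true; the fix is to use the finer independence structure, which is exactly what the paper does in \cref{lem:max_zero}: threshold sampling flips an independent coin for each (user, element) pair, so $\norm{\sum_{i\in B_r}h'_{i,j}}_0$ is stochastically dominated by a Binomial with at most $mn$ trials and success probability $O(1/\thr)$ (the worst case being all $mn$ items distinct), and the Binomial tail at $L_0=\Theta\Paren{\frac{mn}{\thr}\log R}$, i.e. $\Theta(\log R)$ times its mean, is below $1/(32R)$ in the relevant regime. With that substitution your argument matches the paper's.
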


Before proving the lemma, we first show how \cref{thm:ahh} can be implied by \cref{lem:single_run}. 

By \cref{lem:single_run}, for $x$ with $h^{[R]}(x) \ge \thr$, we have 
\[
    \probof{x \in \heavy} \ge \probof{{\rm Binom}\Paren{\rep, 4/5} \ge \rep/2} \ge 1 - \frac{\beta \tau}{40 \nspu \ns R},
\]
where the last inequality follows from standard concentration bounds for Binomial random variables  (\eg Chernoff bound \cite{Mitzenmacher2017probability}).

Hence by union bound, we have 
\[
    \probof{\{x \in [d] \mid h^{[R]}(x) \ge \thr \} \subset \heavy } \ge 1 - \frac{\beta}{40}.
\]

For any $x$, with $h^{[R]}(x) \le \thr/10$, by \cref{lem:single_run}, we have
\begin{align*}
\probof{x \in \heavy} \le \probof{{\rm Binom}\Paren{\rep, \frac{2h^{[R]}(x)}{\thr}} \ge \rep/2} \le \frac{\rep + 1}{2} \Paren{\frac{8e}{5} \cdot \frac{2h^{[R]}(x)}{\thr}}^{\rep/2},
\end{align*}
where the last inequality follows from Binomial tail bound (see \cref{lem:binomial_tail}).

Hence by union bound we have 
\begin{align}
    & \qquad \probof{\{x \in [d] \mid h^{[R]}(x) \le \thr/10 \} \cap H \neq \emptyset}  \nonumber \\
    & \le \sum_{x: h^{[R]}(x) \le \thr/10} \frac{\rep+1}{2} \Paren{ \frac{16e h^{[R]}(x)}{5\thr}}^{\rep/2}  \nonumber  \\
    & \le \frac{20\nspu \ns R}{\thr} \frac{\rep+1}{2} \Paren{\frac{8e}{25}}^{\rep/2} \label{eqn:combine_elements}\\
    & \le  \frac{20\nspu \ns R}{\thr} e^{-\frac{b}{20}} \label{eqn:algebra}\\
    & \le \frac{\beta}{2}, \nonumber
\end{align}
where \eqref{eqn:combine_elements} follows from $x^{\rep/2} + y^{\rep/2} \le (x + y)^{\rep/2}$, and hence we can combine symbols to increase the sum of tail probability and end up with at most $\frac{20\nspu \ns R}{\thr}$ symbols with frequencies at most $\thr/10$. \eqref{eqn:algebra} follows from the inequality $(x + 1/2) (8e/25)^x \le e^{-x/10}$ for $x \ge 130$.

By union bound, we get the guarantee claimed in \cref{thm:ahh}.

\begin{proofof}{\cref{lem:single_run}}
The proof mainly consists of two parts. We will first show that local subsampling will keep each heavy hitter with a high probability and each tail element with a low probability, stated in \cref{lem:threshold}. We will then show that after local subsampling, the number of unique elements in each round will decrease so that the decoding in \cref{alg:subsample_IBLT} will succeed with high probability.

\begin{lemma}\label{lem:threshold}
    Let $h_j^{'[R]}$ be the aggregation of locally subsampled histogram for run $j$, \ie
    \[
        h_j^{'[R]} = \sum_{r \in [R]} \sum_{i \in B_r} h_{i, j}.
    \]
    Then if $h^{[R]}(x) \ge \tau$, 
    \[
        \probof{h_j^{'[R]}(x) > 0 } \ge 1 - \frac{1}{e^2}. 
    \]
    Else if $h^{[R]}(x) \le \tau/10$,
    \[
        \probof{x \in \heavy_j} \le \frac{2h^{[R]}(x)}{\thr}.
    \]
\end{lemma}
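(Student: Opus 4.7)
My plan is to analyze both claims by fixing a target element $x$ and examining each user individually, computing the per-user ``survival'' probability $p_i \eqdef \Pr[h'_{i,j}(x) > 0]$. Reading off \cref{alg:threshold_sampling}: $p_i = 1$ when $h_i(x) \ge t$, $p_i = h_i(x)/t$ when $0 < h_i(x) < t$, and $p_i = 0$ when $h_i(x) = 0$. In particular $p_i \le h_i(x)/t$ in all three cases (using $h_i(x)/t \ge 1$ in the first regime), and the sampling events across $(i,r)$ are mutually independent thanks to the fresh randomness drawn in \cref{line:iblt_encoding}.

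For the \textbf{heavy direction} ($h^{[R]}(x) \ge \tau$), I would split into two sub-cases. If some user has $h_i(x) \ge t$, that user deterministically contributes $t>0$ to its subsampled histogram, so $h_j^{'[R]}(x) > 0$ holds surely. Otherwise every user with $h_i(x)>0$ falls in the proportional regime, and independence together with $1-p \le e^{-p}$ gives
\[
\Pr\Paren{h_j^{'[R]}(x) = 0} \le \exp\Paren{-\textstyle\sum_{i,r} p_i} = \exp\Paren{-h^{[R]}(x)/t}.
\]
The one subtlety is verifying that this second sub-case forces $t = \tau/2$: when $t=1$ (which happens only if $\tau = 1$), the condition ``$h_i(x) < t$ for all $i$'' forces $h_i(x) = 0$ for every $i$ and hence $h^{[R]}(x)=0$, contradicting $h^{[R]}(x)\ge \tau$. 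So $t = \tau/2$ necessarily holds, the exponent is at most $-2$, and the bound becomes $e^{-2}$, as required.

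For the \textbf{tail direction} ($h^{[R]}(x) \le \tau/10$), the key observation is that each $\hat h_{r,j}(x)$ is nonnegative (it equals either the true subsampled sum, or the all-zero vector after a decoding failure in \cref{line:iblt_decoding}), so $x \in \heavy_j$ forces at least one $(i,r)$ with $h'_{i,j}(x)>0$. A union bound combined with $p_i \le h_i(x)/t$ then yields
\[
\Pr\Paren{x \in \heavy_j} \le \sum_{i,r} p_i \le \frac{h^{[R]}(x)}{t} \le \frac{2 h^{[R]}(x)}{\tau},
\]
using $t \ge \tau/2$ in the last step; the boundary case $\tau=1$ is vacuous because integrality then forces $h^{[R]}(x)=0$.

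I do not anticipate a substantive obstacle: the whole argument is a clean combination of $1-p \le e^{-p}$, a union bound, and independence across fresh subsampling randomness. The only real care is the boundary bookkeeping around $t = \max\{\tau/2, 1\}$ in the heavy direction; without observing that the ``no user crosses $t$'' sub-case necessarily lives in the regime $t = \tau/2$, one would only recover $1-1/e$ instead of the claimed $1-1/e^2$.
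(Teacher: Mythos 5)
Your proof is correct and follows essentially the same route as the paper's: per-user keep probabilities that are independent across users, $1-p\le e^{-p}$ for the heavy case to get $1-e^{-2h^{[R]}(x)/\tau}\ge 1-e^{-2}$, and a union bound (equivalently $\prod_i(1-a_i)\ge 1-\sum_i a_i$) for the tail case. You are in fact slightly more careful than the paper in spelling out the $t=\max\{\tau/2,1\}$ boundary and the implication $x\in\heavy_j\Rightarrow h_j^{'[R]}(x)>0$, which the paper leaves implicit (your parenthetical that $t=1$ only when $\tau=1$ is slightly off, since it holds for all $\tau\le 2$, but this does not affect the argument).
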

\begin{proof}
When $h^{[R]}(x) \ge \tau$,
\begin{align*}
    \probof{h_j^{'[R]}(x) > 0 } = 1 - \Pi_{r \in [R], i \in B_r} \min \{1 -  \frac{2 h_{i, j}(x)}{\thr}, 0\} \ge 1 - \Pi_{r \in [R], i \in B_r} e^{-\frac{2 h_{i, j}(x)}{\thr}} =  1 -e^{-\frac{2 h^{[R]}(x)}{\thr}} \ge 1 - \frac{1}{e^2}.
\end{align*}
When $h^{[R]}(x) \le \tau/10$
\begin{align*}
     \probof{h_j^{'[R]}(x) > 0 } = 1 - \Pi_{r \in [R], i \in B_r} \Paren{1 -  \frac{2 h_{i, j}(x)}{\thr}} \le 1 - \Paren{1 - \sum_{r \in [R], i \in B_r}\frac{2 h_{i, j}(x)}{\thr}} = \frac{2h^{[R]}(x)}{\thr}.
\end{align*}
\end{proof}

The next lemma shows that with high probability, the number of elements in each round will decrease by least a factor of $\tau$. 
\begin{lemma} \label{lem:max_zero}
    With probability at least $1 - 1/32$, we have
    \[
        \max_{r \in [R]}\left\{ \norm{h'_r}_0 \right\} = O \Paren{\frac{\nspu \ns}{\tau} \log R}.
    \]
\end{lemma}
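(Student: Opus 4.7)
The plan is to bound $\|h'_r\|_0$ for each round $r$ by a Chernoff-type concentration argument on a sum of independent Bernoullis, and then take a union bound over the $R$ rounds so that the overall failure probability is $\le 1/32$.

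First I would dominate $\|h'_r\|_0$ by a sum of independent indicators. For each $i \in B_r$ and $x \in [d]$, define $Z_{i,x} = \indic{h'_{i}(x) > 0}$, where $h'_i$ is the subsampled histogram produced by \cref{alg:threshold_sampling} in round $r$ (for the repetition $j$ under consideration). Since a coordinate of $h'_r = \sum_{i \in B_r} h'_i$ is nonzero only if at least one summand is,
$$\|h'_r\|_0 \;\le\; \sum_{i \in B_r} \sum_{x \in [d]} Z_{i,x}.$$
By inspection of \cref{alg:threshold_sampling}, the variables $\{Z_{i,x}\}_{i\in B_r,\,x\in[d]}$ are mutually independent (fresh randomness across users by line 5 of \cref{alg:subsample_IBLT}, and fresh randomness across coordinates within a user by the loop in \cref{alg:threshold_sampling}), with $\mathbb{E}[Z_{i,x}] = \min\{1, h_i(x)/t\}$ where $t = \max\{\tau/2,1\}$.

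Second I would bound the mean. Split the inner sum for a fixed user $i$ into coordinates with $h_i(x) \ge t$, each contributing exactly $1$ and numbering at most $\|h_i\|_1/t \le m/t$, and coordinates with $h_i(x) < t$, contributing $h_i(x)/t$ and summing to at most $\|h_i\|_1/t \le m/t$. Summing over users,
$$\mu \;\eqdef\; \mathbb{E}\Bigl[\textstyle\sum_{i,x} Z_{i,x}\Bigr] \;\le\; \frac{2nm}{t} \;=\; O\!\Paren{\tfrac{nm}{\tau}}.$$

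Third I would apply the multiplicative Chernoff bound to this sum of independent $\{0,1\}$ variables. Setting $a = \max\{2\mu,\; C\log R\}$ for a sufficiently large absolute constant $C$, Chernoff in the regime $a \ge 6\mu$ gives $\Pr(\sum_{i,x} Z_{i,x} \ge a) \le 2^{-a} \le 1/(32R)$, while in the regime $a = 2\mu$ the standard relative-error form yields the same bound once $\mu \gtrsim \log R$. A union bound over $r \in [R]$ then produces the claimed $1-1/32$ overall probability, and $a = O\!\Paren{\tfrac{nm}{\tau} + \log R} = O\!\Paren{\tfrac{nm}{\tau}\log R}$ (the trivial bound $\|h'_r\|_0 \le nm$ handles degenerate parameter ranges where $t=1$).

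There is no real obstacle here: the only thing one needs to be careful about is justifying independence of the $Z_{i,x}$'s, which is immediate from the algorithm, and matching the two Chernoff regimes so that the final expression is cleanly of the form $O((nm/\tau)\log R)$. The rest is a one-line union bound.
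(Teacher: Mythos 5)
Your proof is correct and takes essentially the same route as the paper's: dominate $\norm{h'_r}_0$ in each round by a sum of independent Bernoulli indicators with total mean $O(mn/\tau)$ (the paper phrases this as a ${\rm Binom}(mn,1/\tau)$ tail after observing the count is maximized when all $mn$ items are distinct), apply a Chernoff/binomial tail bound to get per-round failure probability $1/(32R)$, and union bound over the $R$ rounds. The only caveat, shared with the paper (whose proof explicitly assumes $mn > 4\tau$ and $R>32$), is that your final step $O(mn/\tau + \log R) = O((mn/\tau)\log R)$ requires $mn/\tau \gtrsim 1$; your remark about the $t=1$ case does not cover the regime $\tau \gg mn$, but that regime is likewise excluded in the paper's analysis.
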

\begin{proof}
Since all rounds are independent, it would be enough to show that $\forall i$, with probability at least $1 - 1/32R$, we have 
\[
    \norm{h'_r}_0  = O \Paren{\frac{\nspu \ns}{\tau} \log R}.
\]
To see this, we have
\[
    \probof{\norm{h'_r}_0 \ge \frac{2\nspu \ns}{\tau} \log R} \le \probof{{\rm Binom}\Paren{mn, \frac1{\tau}}\ge \frac{2\nspu \ns}{\tau} \log R} \le \frac{1}{32R},
\]
where the first step follows from that the left hand side is maximized when all $\nspu\ns$ elements in $h_r$ are distinct, and the second step follows from standard binomial tail bound when $mn > 4 \tau$ and $R > 32$.
\end{proof}

Finally, it would be enough to show that when the condition in \cref{lem:max_zero} holds, the decoding of the aggregated IBLT will succeed with high probability. This is true since by \cref{lem:iblt} and union bound, we have
\[
    \probof{\forall j, \hat{h}_j^{[R]} = h_j^{'[R]}} \ge 1 - R \cdot (\frac{\nspu \ns}{\tau} \log R)^{2-\gamma} \ge 1 - 1/32,
\]
where the last inequality holds when $mn > 4 \tau$ and $R > 32$.
Combining the above and \cref{lem:max_zero,lem:threshold}, we conclude the proof since $1/e^2 + 1/32 + 1/32 \le 1/5$.
\end{proofof}

\section{Proof of \cref{thm:ahist_r}} \label{sec:ahist_app}

\new{We start with the case when $\tau \le \sqrt{R}$. In this case, \cref{alg:ahist_r} implements \cref{alg:subsample_IBLT} with $\tau = 1$ and returns the obtained histogram in Line 11. Notice that when $\tau = 1$, the subsampling step is trivial and each user encodes their entire histogram. Hence as long as long the decoding of IBLT succeeds (as promised in the performance analysis of \cref{alg:subsample_IBLT}), we recover the histogram perfectly, \ie $\hat{h}^{[R]} = h^{[R]}.$ And the communication cost will be $\tilde{\Theta}(mn)$.
}

Next we focus on the case when $\tau \ge \sqrt{R}.$ We will condition on the event that the list $H$ obtained in Line 8 of \cref{alg:ahist_r} is a $\thr$ approximate heavy hitter set and hence setting $\hat{x} = 0$ for $x \notin H$ won't introduce error larger than $\thr$.

The rest of the proof follows similarly as the standard proof for Count-sketch. Since $\rep = \ceil{\log\Paren{\frac{4\nspu\ns R}{\tau \beta}}}$, it would be enough to prove that $\forall x \in \cX$, with probability at least 2/3, we have
\[
    |\sum_{r \in [R]} T_r(j, \hash_{j}(x)) \cdot s_{j, r}(x)  - h^{[R]}(x)|  = O(\thr).
\]
Let
\begin{align*}
       \hat{h}_j(x) \eqdef & \sum_{r \in [R]} T^{(r)}(j, \hash_{j}(x))  \cdot s_{j, r}(x) \\ =  & \sum_{r \in [R]}  \sum_{x'}\indic{ \hash_{j}(x') = \hash_{j}(x)} s_{j, r}(x') s_{j, r}(x) \cdot h^{(r)}(x') \\
        = & \sum_{x'}\indic{ \hash_{j}(x') = \hash_{j}(x)} \sum_{r \in [R]} s_{j, r}(x') s_{j, r}(x) \cdot h^{(r)}(x') \\
        = & h^{[R]}(x) +  \sum_{x'\neq x}\indic{ \hash_{j}(x') = \hash_{j}(x)} \sum_{r \in [R]} s_{j, r}(x') s_{j, r}(x) \cdot h^{(r)}(x') 
\end{align*}

Then we have $\expectation{ \hat{h}_j(x) = h^{[R]}(x)}$. Next we provide a bound on the variance.  
Let $H_{10\tau/\sqrt{R}}$ be the set of elements with frequency at least $10\tau/\sqrt{R}$, then we have $|H_{10\tau/\sqrt{R}}| \le \frac{mn\sqrt{R}}{10\tau}$. Since $w = \ceil{\frac{10mn\sqrt{R}}{\tau}}$, we have with probability at least 5/6, 
\[
    \sum_{x' \in H_{10\tau/\sqrt{R}}, x' \neq x}\indic{ \hash_{j}(x') = \hash_{j}(x)}  = 0.
\]
Conditioned on this event, we have
\begin{align*}
    \expectation{\Paren{\hat{h}_j(x)  - h^{[R]}(x)}^2}  & = \expectation{\Paren{\sum_{x' \notin H_{10\tau/\sqrt{R}}, x' \neq x}\indic{ \hash_{j}(x') = \hash_{j}(x)} \sum_{r \in [R]} s_{j, r}(x') s_{j, r}(x) \cdot h^{(r)}(x')}^2}
    \\
    & \le \frac{\max_{x' \notin H_{10\tau/\sqrt{R}}} h^{[R]}(x) \sum_{x' \notin H_{10\tau/\sqrt{R}}} h^{[R]}(x) }{w} \\
    & \le \thr^2.
\end{align*}

Hence with probability at least $5/6$, we have
\[
     \expectation{\absv{\hat{h}_j(x)  - h^{[R]}(x)}} \le \sqrt{6}\tau.
\]
We conclude the proof by a union bound over the two events.
\section{Additional details on IBLT}
\label{sec:iblt_app}
\paragraph{Intuition on \texttt{ListEntries} for IBLT.} The intuition behind the IBLT construction is as follows:
Start with an array $\mathcal{B}$
of length $\ell$ containing 4-tuples
of the form $(0,0,0,0)$.
To insert pair $(x, v)$
hash the tuple ($x$, $\tilde{x}$, $v$, $1$) into $k$ locations $l_1, \ldots, l_k$ in $\mathcal{B}$ based on the key $x$, where $\tilde{x} := G(x)$ is a 
hash of $x$ into a sufficiently large domain so that collision probability is sufficiently unlikely. %
Then add, using component-wise sum, ($x$, $\tilde{x}$, $v$, $1$) to the contents of 
$\mathcal{B}$ in all locations $l_1, \ldots, l_k$.
The $\texttt{ListEntries}/\deciblt$ operation corresponds to the result of the following procedure: (1) find an
entry ($x_{sum}$, $\tilde{x}_{sum}$, $v_{sum}$, $j$) such that $G(x_{sum}/j) = \tilde{x}_{sum}/j$ holds, 
(2) add $(x_{sum}/j, v_{sum})$ to the output, and
(3) remove the pair $(x_{sum}/j, v_{sum})$
by subtracting ($x_{sum}$, $\tilde{x}_{sum}$, $v_{sum}$, $j$)
from the entries $l_1', \ldots, l_k'$ in the array $\mathcal{B}$ to which an insertion would add the tuple for key $x_{sum}/j$ and get back to step (1).
The process of listing entries a.k.a ``peeling off" $\mathcal{B}$.
might terminate before the IBLT is empty.
This is the failure procedure in 
Lemma~\ref{lem:iblt}, which corresponds to the natural procedure to find a 2-core in a random graph \cite{Goodrich2011iblt}.

\paragraph{Sketch size.} 
The above intuition corresponds to the IBLT construction variant from~\cite{Goodrich2011iblt}
that can handle duplicates.
It can be implemented with four length $\ell$ vectors with entries in 
$[d], \texttt{Im}(G), [mn], [mn]$,
respectively. 
In terms of concrete parameters (see ~\cite{Goodrich2011iblt} for details), $k = 3, \ell > 1.3 L_0$,
and $G = \mathbb{Z}_p$ with $p = 2^{31}-1$ give good performance, and require
$1.3L_0(32 + \log_2 d + 2\log_2(mn))$ bits. For the experiment setting considered in \cref{sec:exp}, this is will take at most $8L_0$ words.

\paragraph{Cardinality estimation from saturated IBLT.} 
\cref{lem:iblt} tells us that 
a tight bound $L_0$ on the number of distinct non-zero indices in the intended histogram, 
can save us space in an IBLT encoding. %
However, getting that bound wrong results in an undecodable IBLT. 
While in the single round case all is lost, in the multi-round setting
we leverage a property of undecodable IBLTs that helps update our $L_0$ bound for subsequent rounds
after a failed round. This is the main ingredient for our adaptive tuning heuristic presented in Section~\ref{sec:adaptive}.

Let $\mathcal{B}$ be an undecodable IBLT, and let $S$ be the size of the undecoded graph of $\mathcal{B}$. Also let $\ell$ be the size of $\mathcal{B}$, and let $N$ the (unknown) 
number of distinct elements inserted in $\mathcal{B}$ 
(note that $N$ corresponds to the correct bound $L_0$ that enables decoding).
By \cite{molloy2005cores}, we have the following relation: 
For large enough $N$, if $S < \ell$, we have 
\begin{equation}\label{eq:cardinality-from-core}
    \frac{S}{C} = 1 - e^{-x}(1+x) + o(1),
\end{equation}
where $x$ is the greatest solution to 
\begin{equation}\label{eq:cardinality-from-core-2}
    \frac{6N}{C} = \frac{2x}{(1 - e^{-x})^2}.
\end{equation}
Hence we can have an estimate for $N$ (and thus a correct choice for $L_0$ in a subsequent round) 
based on  $S$ and $C$. We first solve \eqref{eq:cardinality-from-core} ignoring the $o(1)$ term to get $x$ and then plug $x$ and $C$ into \eqref{eq:cardinality-from-core-2} to get an estimate for $N$.
As mentioned above we leverage this fact in Section~\ref{sec:adaptive}.

\section{Proof of lower bounds.}

\subsection{Proof of \cref{thm:ahh_lower}}
\label{sec:ahh_lower}
We will focus on the case when $R = 1$ since the claimed bound doesn't depend on $R$ and we can assume there is no data in other $R - 1$ rounds. We will consider the case when $10< \tau < n/4$.

\new{We prove the theorem by a reduction to the set disjointness problem \citep{BARYOSSEF2004702, Jayram09}. The set disjointness problem ($\textsc{Dist}_{t, d}$) considers the setting where $t$ users where user $i$ has a set of elements $S_i \subset \{1, 2, \ldots, d\}$. The goal is to distinguish between the following two chase with success probability at least $4/5$.
\begin{enumerate}
    \item All $S_i$'s disjoint.
    \item There exists $x \in [d]$ such that for all $i, j \in [t]$, $S_i \cap S_j = \{x\}$.
\end{enumerate}
And the goal is to minimize the size of the transcript of all communications among all users. More specifically, we will use the following lemma:
\begin{lemma}[\citep{Jayram09}]\label{lem:set_disjoint}
Any protocol that solves $\textsc{Dist}_{t, d}$ must have a transcript of size at least $\Theta(d/t)$.
\end{lemma}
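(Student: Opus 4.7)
The plan is to establish this multi-party communication lower bound via the standard information complexity framework developed by Bar-Yossef, Jayram, Kumar and Sivakumar, and refined in the Jayram 2009 paper. The key is to reduce the full $d$-coordinate problem to a single-coordinate primitive (essentially $\textsc{And}_t$ on one bit) via a direct-sum theorem, and then lower bound the single-coordinate information cost by $\Omega(1/t)$ using Hellinger distance.

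First, I would design a ``collapsing'' hard distribution $\mu$ on NO-instances of $\textsc{Dist}_{t,d}$: for each coordinate $j \in [d]$ independently, sample a uniformly random special party $P_j \in [t]$ and a Bernoulli bit $Z_j$; if $Z_j=1$, only party $P_j$ holds $j$, otherwise no one holds $j$. The resulting sets are pairwise disjoint with probability one, so $\mu$ is supported on the NO side of the promise. A YES-instance is created by additionally placing a uniformly random coordinate $x^\star$ in every party's set. Let $D = (P_1, \ldots, P_d)$ be the vector of special parties; $D$ will serve as the conditioning variable.

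Second, I would use the product structure of $\mu$ over coordinates to invoke a direct-sum theorem for conditional information complexity:
\begin{equation*}
\mathrm{icost}_\mu(\Pi \mid D) \;=\; I(X ; \Pi \mid D) \;\ge\; d \cdot \min_{\pi} \mathrm{icost}_{\mu_1}(\pi \mid D_1),
\end{equation*}
where $\mu_1$ is the marginal of $\mu$ on a single coordinate and $\pi$ ranges over $t$-party protocols solving the single-bit primitive: decide whether all $t$ parties hold the bit (YES) or the bit is held by at most one party chosen according to $D_1$ (NO). This follows by the standard argument that the induced protocol on coordinate $j$, with the other coordinates filled in privately from $\mu \mid D$, solves the primitive and has information cost bounded by the per-coordinate contribution.

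Third, the technical core is to prove that the single-coordinate primitive has information cost $\Omega(1/t)$. For each input vector $z \in \{0,1\}^t$ let $\Pi_z$ denote the transcript distribution on input $z$. Correctness with constant error, combined with the cut-and-paste lemma for number-in-hand protocols, forces the squared Hellinger distance $h^2(\Pi_{\mathbf 0}, \Pi_{\mathbf 1}) = \Omega(1)$. On the other hand, I would hop from $\mathbf 0$ to $\mathbf 1$ through the $t$ unit-weight inputs $e_1, \ldots, e_t$ using the Hellinger triangle inequality, noting that each hop $\Pi_{e_i} \to \Pi_{e_{i+1}}$ changes the input of exactly one pair of parties, so its Hellinger distance is controlled by the information revealed by those parties under $\mu_1$ (which gives equal weight to the $t$ single-$1$ inputs and to $\mathbf 0$). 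A Cauchy--Schwarz step converts the sum of $t$ Hellinger distances into a lower bound of $\Omega(1/t)$ on the per-party information, hence $\mathrm{icost}_{\mu_1}(\pi \mid D_1) = \Omega(1/t)$.

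Putting the two pieces together yields $I(X;\Pi \mid D) \ge \Omega(d/t)$, and since the transcript length dominates its entropy which dominates this mutual information, we get the claimed $\Omega(d/t)$ communication bound. The main obstacle is the single-coordinate Hellinger argument: both the cut-and-paste lemma, which needs to be stated carefully for the number-in-hand model, and the symmetrization step that yields the $1/t$ factor (rather than a weaker $1/\sqrt t$ or constant). The direct sum and the conversion from communication to information are routine once the distribution $\mu$ and the conditioning variable $D$ have been chosen to make $\mu$ a product measure.
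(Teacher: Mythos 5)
The paper does not actually prove this lemma --- it is imported verbatim from Jayram (2009), building on Bar-Yossef, Jayram, Kumar and Sivakumar --- so the comparison is between your sketch and that known proof. Your overall architecture is the right one: a collapsing product distribution conditioned on a special-party vector $D$, a direct-sum reduction to the one-coordinate $t$-party AND primitive, and a Hellinger analysis of that primitive. The gap is in the last and hardest step. Under $\mu_1$ the conditional information cost is $\frac{1}{t}\sum_{i=1}^{t} I(Z;\Pi\mid D_1=i)$, which is controlled by $\frac{1}{t}\sum_i h^2(\Pi_{\mathbf 0},\Pi_{e_i})$; to obtain the claimed $\Omega(1/t)$ per coordinate you therefore need $\sum_i h^2(\Pi_{\mathbf 0},\Pi_{e_i})=\Omega(1)$. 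The route you describe --- a path from $\mathbf 0$ to $\mathbf 1$ through the $e_i$'s, the triangle inequality, and Cauchy--Schwarz --- only yields $\sum_i h^2(\Pi_{\mathbf 0},\Pi_{e_i}) \ge \frac{1}{t}\bigl(\sum_i h(\Pi_{\mathbf 0},\Pi_{e_i})\bigr)^2 = \Omega(1/t)$, hence an information cost of $\Omega(1/t^2)$ per coordinate and a total bound of $\Omega(d/t^2)$. That is exactly the older Bar-Yossef et al.\ bound and is a factor of $t$ weaker than the $\Omega(d/t)$ the lemma asserts. The entire content of the cited Jayram (2009) note is the removal of this Cauchy--Schwarz loss: using the rectangle (product) structure of transcript distributions and properties of the Hellinger affinity, one proves directly that $h^2(\Pi_{\mathbf 0},\Pi_{\mathbf 1}) \le (2+\sqrt 2)\sum_i h^2(\Pi_{\mathbf 0},\Pi_{e_i})$, which combined with $h^2(\Pi_{\mathbf 0},\Pi_{\mathbf 1})=\Omega(1)$ (from correctness plus the collapsing property) gives the required $\Omega(1)$ without dividing by $t$. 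Your sketch does not contain this ingredient, so as written it does not establish the stated bound.

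A second, more local problem: the path $\mathbf 0 \to e_1 \to \cdots \to e_t \to \mathbf 1$ is not a sequence of single-pair changes --- the final hop $e_t \to \mathbf 1$ flips $t-1$ coordinates at once --- so the claim that each hop's Hellinger distance is controlled by the information revealed by one pair of parties fails precisely where it is needed to connect the NO inputs $e_i$ to the YES input $\mathbf 1$. The direct-sum and entropy-to-communication steps in your outline are fine; the single-coordinate lemma is where the real work (and the missing idea) lies.
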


Next we show that $\textsc{Dist}_{t, d}$ with $t = \tau$ and $d = mn/2$ can be reduced to the approximate heavy hitter problem. We divide users into $\tau + 1$ groups. For $i \in [\tau]$, the $i$th group has $n_i = \ceil{|S_i|/m}$ users. And let $\tilde{S}_i$ be set of all elements held by users in group $i$. We  partition $S_i$ to subsets of size at most $m$ and distribute them to users in group $i$ arbitrarily. This can be done since $mn_i \ge |S_i|$. The total number of users in the first $\tau$ groups is $\sum_{i \in [\tau]}n_i \le \frac{d + \tau}{m} + \tau \le n$. 
The $\tau + 1$ group has $n - \sum_{i \in [\tau]}n_i$ users and each user has zero element. 

Suppose there exists a $\tau$-\ahh~ linear sketch algorithm with communication cost per-user $o(\frac{mn}{\tau})$. When $S_i$'s are disjoint, all elements in $[d]$ will have frequency $1 < \tau/10$. The algorithm should output an empty list. When $S_i$'s have an unique intersection, the element will have frequency $\tau$, and hence the algorithm should output a list with size 1. By distinguishing between the two cases, the \ahh~algorithm can be used to solve $\textsc{Dist}_{t, d}$.

Moreover, under linear sketching constraint, the size of the transcript is the same as the per-user communication. Hence we conclude the proof by noticing that this violates \cref{lem:set_disjoint}.}

\subsection{Proof of \cref{thm:ahist_lower}}

Here we prove a stronger version of the lower bound where in each round $r$, the communication among users is not limited but
the users in $B_r$ must compress $h^{(r)}$ to an element $Y^{(r)} \in G_r$ with $|G_r| \le 2^\ell$, which is observed by the server. And the server will then obtain an approximate histogram $\widehat{h}^{[R]}$ based on $\Pi = (Y^{(1)}, \ldots, Y^{(R)}, U)$. \new{For a given $\dist$, next we show that any protocol with $\ell = o\Paren{ \min\{ \frac{\nspu \ns\sqrt{R}}{\thr}, \nspu \ns \} }$ won't solve $\dist$-approximate heavy hitter with error probability at most 1/5. We will focus on the case when $\tau \ge \sqrt{R}$ and $\ell = o\Paren{\frac{\nspu \ns\sqrt{R}}{\thr}}$. When $\tau < \sqrt{R}$, the bound follows by setting $\tau = \sqrt{R}$ and the fact that the problem gets harder as $\tau$ decreases.} To simply the proof, we assume $R \ge 400$ without loss of generality.

We consider histograms $h^{(r)}, \forall r \in [R]$ supported over the domain $10\ell$ and are generated \iid~from a distribution $P$.
Let $Z$ be uniformly distributed over $\{\pm 1\}^{5\ell}$, and under distribution $P_Z$, we have $\forall r \in [R], i \in [5\ell]$,
\[
    h^{(r)}(2i) = \begin{cases}
    \frac{mn}{5\ell} & \qquad \text{ with prob } \frac{1}{2} + \frac{10}{\sqrt{R}} Z_i. \\
    0 & \qquad \text{ with prob } \frac{1}{2} - \frac{10}{\sqrt{R}} Z_i.
    \end{cases}
\]
and $$h^{(r)}(2i - 1) = 1 - h^{(r)}(2i).$$
It can be check that $\norm{h^{(r)}}_1 = \nspu \ns$ with probability 1. 
We prove the theorem by contradiction. If the protocol solves $\dist$-approximate heavy hitter with error probability at most 1/5, let
\[
    \hat{Z}_i = \indic{\hat{h}^{[R]}(2i) > \frac{mnR}{10\ell}}.
\]

We have
\begin{align*}
    \probof{\hat{Z}_i \neq Z_i} & \le   \probof{{\absv{\hat{h}^{[R]}(2i) - h^{[R]}(2i)}}  \ge \frac{\nspu \ns \sqrt{R}}{\ell}}  + \probof{\absv{ h^{[R]}(2i) - \frac{\nspu\ns R}{5\ell} \Paren{\frac12 + \frac{10}{\sqrt{R}}Z_i}} \ge \frac{\nspu \ns \sqrt{R}}{\ell}} \\
    & \le \frac{1}{5} + \frac{1}{25} = \frac{6}{25},
\end{align*}

where the first probability is bounded by the success probability of the algorithm and the second probability is bounded using Hoeffding bound. Hence we have
\[
    \sum_{i \in [5\ell]} I(Z_i; \Pi) \ge  \sum_{i \in [5\ell]} I(Z_i; \hat{Z}_i) \ge \sum_{i \in [5\ell]} (1 - H(\frac{5}{26})) \ge 2 \ell,
\]
where $H(p)$ is the Shannon entropy of a Bernoulli random variable with success probability $6/25$.

\new{To upper bound $\sum_{i \in [5\ell]} I(Z_i; \Pi)$, we notice that the vector \[\Paren{\frac{5\ell h^{(r)}(2)}{mn}, \frac{5\ell h^{(r)}(4)}{mn}, \ldots, \frac{5\ell h^{(r)}(10\ell)}{mn}}.\]
follows a product distribution with the marginal of each coordinate being a Bernoulli distribution. Hence by standard arguments on communication-limited estimation of product of Bernoulli random variables (\eg in \cite{braverman2016communication, han2021geometric, acharya2020unified}). In particular, following almost the same steps as in \citet[Section 7.1]{acharya2020unified}, }
\[
     \sum_{i \in [5\ell]} I(Z_i; \Pi) \le R \cdot \Paren{\frac{1}{\sqrt{R}}}^2 \ell = \ell,
\]
which leads to a contradiction. This concludes the proof.

\section{Binomial tail bound.}
\begin{lemma} \label{lem:binomial_tail}
Let $X \sim {\rm Binom}(n, p)$ be a binomial distribution, when $n > 10$ and $p < 1/5$, we have
\[
    \probof{X \ge n/2} \le \frac{n + 1}{2} \Paren{\frac{8ep}{5}  }^{n/2}.
\]
\end{lemma}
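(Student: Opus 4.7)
The plan is to reduce the bound to a Chernoff-type estimate on the binomial tail and then observe that the right-hand side of the lemma is actually a loose upper bound on that estimate, with plenty of slack. Concretely, I would write
\[
\probof{X \ge n/2} = \sum_{k = \lceil n/2 \rceil}^{n} \binom{n}{k} p^{k}(1-p)^{n-k}
\]
and control this sum in two steps: (i) a geometric-ratio argument on consecutive terms, and (ii) a crude bound on the leading binomial coefficient.

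For (i), the ratio of consecutive summands is
\[
\frac{\binom{n}{k+1} p^{k+1}(1-p)^{n-k-1}}{\binom{n}{k} p^{k}(1-p)^{n-k}} = \frac{(n-k)\,p}{(k+1)(1-p)}.
\]
For $k \ge n/2$ we have $(n-k)/(k+1) \le 1$, and for $p < 1/5$ we have $p/(1-p) < (1/5)/(4/5) = 1/4$. So consecutive terms shrink by a factor of at least $4$, and the full sum is bounded by $(1-1/4)^{-1} = 4/3$ times the leading term. For (ii), use $\binom{n}{\lceil n/2 \rceil} \le 2^n$ to bound the leading term (in the even-$n$ case) by $2^{n} p^{n/2}(1-p)^{n/2} = (4p(1-p))^{n/2}$. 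The odd-$n$ case yields the same form after absorbing one extra factor of $\sqrt{p(1-p)} \le 1$.

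To match the stated form I would then note that $2e > 5$, hence $4(1-p) \le 4 \le 8e/5$, so $4p(1-p) \le (8e/5)p$. Combined with $4/3 \le (n+1)/2$ for $n > 10$, this gives
\[
\probof{X \ge n/2} \;\le\; \tfrac{4}{3}\bigl(4p(1-p)\bigr)^{n/2} \;\le\; \frac{n+1}{2}\Paren{\frac{8ep}{5}}^{n/2},
\]
which is the claim. There is no real obstacle: the only care needed is in handling the odd-$n$ case cleanly and in verifying the three constant inequalities $p/(1-p) < 1/4$, $4 \le 8e/5$, and $4/3 \le (n+1)/2$, all of which hold comfortably under the hypotheses $n > 10$ and $p < 1/5$.
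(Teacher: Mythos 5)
Your proof is correct, and it handles the tail sum differently from the paper. The paper bounds $\probof{X \ge n/2}$ by (number of terms) $\times$ (largest term): it uses that $\probof{X=i}$ is decreasing for $i \ge n/2$ when $p<1/2$, so the tail is at most $\frac{n+1}{2}\binom{n}{\lceil n/2\rceil}\paren{p(1-p)}^{n/2}$, and then invokes the bound $\binom{n}{\lceil n/2\rceil}\le (2e)^{n/2}$ to reach the stated form; there the factor $\frac{n+1}{2}$ genuinely counts terms. You instead observe that consecutive summands shrink by a factor of at least $4$ (since $(n-k)/(k+1)\le 1$ for $k\ge n/2$ and $p/(1-p)<1/4$ for $p<1/5$), so the entire tail is at most $\frac{4}{3}$ times the leading term, and you use the cruder $\binom{n}{\lceil n/2\rceil}\le 2^n$. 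Your route yields the strictly stronger intermediate bound $\frac{4}{3}\paren{4p(1-p)}^{n/2}$, after which the lemma's $\frac{n+1}{2}$ and the factor $e$ enter purely as slack via $4\le 8e/5$ and $\frac43 \le \frac{n+1}{2}$; it also quantifies the geometric decay rather than merely using monotonicity, which is what lets you drop the linear-in-$n$ factor. One cosmetic slip: in the odd-$n$ case the leftover factor relative to $\paren{p(1-p)}^{n/2}$ is $\sqrt{p/(1-p)}$, not $\sqrt{p(1-p)}$; since $p<1/5$ this is still at most $1$ (indeed at most $1/2$), so nothing in your argument is affected.
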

\begin{proof}
    \begin{align}
        \probof{X \ge n/2} & = \sum_{i = \floor{(\ns+1)/2}}^\ns  \probof{X = i} \nonumber  \\ 
        & = \sum_{i = \floor{(\ns+1)/2}}^\ns  {n \choose i} \Paren{1 - p}^{n - i} p^i \nonumber  \\
        & \le \frac{n + 1}{2} {n \choose \ceil{n/2}}  \Paren{(1 - p)p}^{n/2}  \label{eqn:n_2}\\
        & \le  \frac{n + 1}{2} (2e)^{n/2} \Paren{\frac{4p}5}^{n/2} \label{eqn:stirling}\\
        & = \frac{n + 1}{2} \Paren{\frac{8ep}{5}  }^{n/2}, \nonumber 
    \end{align}
    where \eqref{eqn:n_2} follows from $\probof{X = i}$ is monotonically decreasing when $i \ge n/2$ and \eqref{eqn:stirling} follows from standard bounds on binomial coefficients.
\end{proof}
\end{document}